\PassOptionsToPackage{dvipsnames,table}{xcolor}

\documentclass[manuscript,screen,nonacm,10pt]{acmart}
\setcopyright{cc}
\setcctype{by}

\AtBeginDocument{%
}

\usepackage{multirow}
\usepackage{minted}
\usepackage[normalem]{ulem}
\usepackage[linesnumbered,ruled,vlined]{algorithm2e}
\usepackage{algpseudocode}
\usepackage[dvipsnames]{xcolor}
\usepackage[table]{xcolor}
\usepackage{adjustbox}

\newlength{\commentWidth}
\usepackage{amsmath}

\usepackage{balance}

\usepackage[utf8]{inputenc}
\usepackage{graphicx}
\usepackage{subcaption}
\usepackage{wrapfig}
\theoremstyle{definition}
\newtheorem{definition}{Definition}
\usepackage{subcaption}

\theoremstyle{example}
\newtheorem{example}{Example}

\newtheorem{theorem}{Theorem}

\newcommand{\jw}[1]{\textcolor{orange}{{\it [JW: #1]}}}

\usepackage{tikz}
\usetikzlibrary{positioning}
\usetikzlibrary{backgrounds}
\usetikzlibrary{calc}
\usetikzlibrary{fit}
\usetikzlibrary{positioning}
\usetikzlibrary{shadows}
\usetikzlibrary{shapes.geometric}
\usetikzlibrary{shapes.multipart}
\usetikzlibrary{backgrounds}
\usetikzlibrary{calc}
\usetikzlibrary{fit}
\usetikzlibrary{decorations.pathmorphing}
\usetikzlibrary{arrows.meta}
\usetikzlibrary{tikzmark}

\definecolor{mcolor}{rgb}{0.1,0.5,1}
\colorlet{lightmain}{mcolor!75}
\colorlet{lightermain}{mcolor!50}
\colorlet{lightestmain}{mcolor!25}
\colorlet{darkmain}{mcolor!75!black}
\colorlet{darkermain}{mcolor!50!black}
\colorlet{darkestmain}{mcolor!25!black}
\definecolor{MistyRose}{rgb}{1.0, 0.89, 0.88}
\definecolor{lightgreen}{RGB}{200,230,200}
\definecolor{lightorange}{RGB}{255,210,170}
\definecolor{lightred}{RGB}{255,200,200}
\definecolor{lightgray}{RGB}{230,230,230}

\newcommand*\circled[1]{\tikz[baseline=(char.base)]{
            \node[shape=circle,draw,inner sep=1pt] (char) {#1};}}

\definecolor{mygreen}{rgb}{0,0.6,0}
\colorlet{lightmygreen}{mygreen!40}

\colorlet{lightestorange}{orange!45}
\colorlet{lightestyellow}{yellow!45}

\definecolor{myorange}{HTML}{FF5522}
\definecolor{myyellow}{HTML}{FFCC11}
\definecolor{mypurple}{HTML}{AA66DD}
\definecolor{myblue}{HTML}{0077DD}
\definecolor{mypink}{HTML}{FF55BB}
\definecolor{mywhitepink}{HTML}{FFDDEE}

 \usepackage{enumitem}
 \usepackage{float}

\usepackage{bbding}
\usepackage{pifont}
\usepackage{wasysym}

\usepackage{colonequals}

\usepackage{listings}

\lstdefinestyle{motcodeStyle}{
    basicstyle=\ttfamily\small,
    keywordstyle=\color{blue}\bfseries,
    numbers=left,
    numberstyle=\tiny,
    numbersep=5pt,
    xleftmargin=1em,
    escapeinside={(*}{*)}
}

\newcommand{\NAME}{\textsc{\texttt{PPProbe}}\xspace}

\newcommand{\TOME}{\textsc{\texttt{TOME}}\xspace}
\newcommand{\MARCO}{\textsc{\texttt{MARCO}}\xspace}
\newcommand{\REMUS}{\textsc{\texttt{ReMUS}}\xspace}

\newcommand{\UNSAT}{\texttt{UNSAT}\xspace}
\newcommand{\MUS}{\texttt{MUS}\xspace}

\usepackage{caption}
\SetCommentSty{mycommentfont}

\makeatletter
\newenvironment{nineptscope}{%
  \begingroup
  \def\@typesizes{%
    \or{5}{6}%
    \or{5}{6}%
    \or{6}{7}%
    \or{7}{8}%
    \or{8}{10}%
    \or{9}{11}
    \or{10}{12}%
    \or{\@xipt}{13}%
    \or{\@xiipt}{14}%
    \or{\@xivpt}{17}%
    \or{\@xviipt}{20}%
  }%
  \normalsize
}{%
  \endgroup
}
\makeatother

\setcopyright{none}
\makeatletter
\AtEndPreamble{%
  \global\@ACM@balancefalse
  \RequirePackage{pbalance}
}
\makeatother

\begin{document}

\title{Conflict Extraction in Probabilistic Datalog Analyses}


\author{Siyu Chen}
\orcid{0009-0009-8050-9395}
\affiliation{%
  \institution{Purdue University}
  \city{West Lafayette}
  \country{USA}
}
\email{chen5216@purdue.edu}

\author{Chungha Sung}
\orcid{0009-0009-3725-6484}
\authornote{The work is not related to the author’s position in the affiliation.}
\affiliation{%
  \institution{Amazon, Inc.}
  \city{Seattle}
  \country{USA}
}
\email{chunghs@amazon.com}

\author{Xuyang Li}
\orcid{0009-0007-7978-4723}
\affiliation{%
  \institution{Purdue University}
  \city{West Lafayette}
  \country{USA}
}
\email{li5274@purdue.edu}

\author{Jingbo Wang}
\orcid{0000-0001-5877-2677}
\affiliation{%
  \institution{Purdue University}
  \city{West Lafayette}
  \country{USA}
}
\email{wang6203@purdue.edu}

\renewcommand{\shortauthors}{Chen et al.}


\begin{abstract}

Probabilistic extensions of Datalog enable static analyses such as pointer analysis, data race detection, and side-channel analysis to rank alarms by likelihood, but this added expressiveness also introduces a new challenge absent from deterministic analyses: the final output may contain alarms that are individually plausible yet mutually inconsistent, because marginal probabilities do not guarantee joint satisfiability. As a result, developers may spend effort investigating combinations of alarms that can never co-occur in any possible world. We address this problem by formalizing such inconsistencies as minimal unsatisfiable subsets (\texttt{MUSes}) and introducing \NAME, a conflict extractor specialized for probabilistic Datalog analyses. Rather than improving \MUS enumeration in general, \NAME exploits the structure of Datalog derivation graphs to guide the search toward likely conflicts and prune the search space through bottom-up \UNSAT inference. We evaluate \NAME on 70 benchmarks from power side-channel analysis, data race detection, semantic diffing, and Bayesian-network inference. The results show that \NAME achieves $2.5$--$24\times$ higher throughput than state-of-the-art \MUS enumerators, and that the conflicts it identifies yield a conservative estimate of false-positive reduction, filtering out an average of 47.7\% of mutually inconsistent alarms.

\end{abstract}

\begin{CCSXML}
  <ccs2012>
    <concept>
      <concept_id>10011007.10010940.10010992.10010998.10011000</
      concept_id>
      <concept_desc>Software and its engineering~Automated static
      analysis</concept_desc>
      <concept_significance>500</concept_significance>
    </concept>
    <concept>
      <concept_id>10011007.10010940.10010992.10010998.10010999</
      concept_id>
      <concept_desc>Software and its engineering~Software
      verification</concept_desc>
      <concept_significance>500</concept_significance>
    </concept>
  </ccs2012>
\end{CCSXML}

\ccsdesc[500]{Software and its engineering~Automated static analysis}
\ccsdesc[500]{Software and its engineering~Software verification}
\keywords{probabilistic Datalog, program analysis, minimal unsatisfiable subsets}

\maketitle

\section{Introduction}
\label{sec:introduction}

While classical logic programming languages such as Datalog~\cite{Ullman1988, Ceri1990} and Prolog are widely applied to domains like graph analysis~\cite{Seo2013, Seo2015, Tekle2010}, bioinformatics~\cite{Seo2018, Mungall2009}, and program analysis~\cite{Naik2006, Raghothaman2018, Zhang2017, wang2019mitigating, wang2021data, Bravenboer2009, Madsen2016, Smaragdakis2014, Whaley2005, Zhang2014, Sung2018, Kusano2016, kusano2017}, probabilistic extensions like ProbLog~\cite{De2007}, \textsc{Bingo}~\cite{Raghothaman2018}, Praline~\cite{wang2025OOPSLA} and Scallop~\cite{li2023scallop} enhance expressiveness by modeling uncertainty for ranking and risk prioritization—such as in side-channel detection, where leakage probability yields more actionable insights than binary detection~\cite{saha2023obtaining,tizpaz2019quantitative,hadvzic2024quantile}.

However, probabilistic reasoning fundamentally changes inference interpretation; unlike deterministic Datalog, which assumes a single world of facts, probabilistic languages consider exponentially many possible worlds. 
In this setting, two output facts may be valid in different worlds but mutually exclusive if no single world satisfies both--causing fundamentally incompatible outputs to appear simultaneously.
%
%
%
This lack of visibility hinders tasks like program analysis, where developers must decide which alarms to investigate and trace to root cause.
%

To systematically capture and explain these conflicts, we represent them as \emph{minimal unsatisfiable subsets \allowbreak (MUSes)}—minimal sets of output facts whose conjunction is unsatisfiable (\UNSAT) while every proper subset is satisfiable—thereby isolating the exact sources of inconsistency.
%
%

Despite extensive study, \MUS generation remains challenging because of its exponential search space.
To improve scalability, prior work has proposed both domain-agnostic~\cite{Liffiton2016FastFlexibleMUS, LiffitonSakallah2005, Liffiton2008, Bendik2018ReMUS, Bendik2016TOME, Bendik2020Unimus} and domain-aware~\cite{Van1981, Chinneck1991, Gleeson1990, Bendik2017LTL, Gasca2007} techniques.
Domain-agnostic methods iteratively manage blocking constraints to prune infeasible candidates and perform well on Boolean satisfiability benchmarks, but they do not scale well to probabilistic Datalog because they ignore the structural and statistical dependencies in Datalog derivation graphs.
Domain-aware methods use specialized knowledge to guide the search more effectively, but they are often tightly coupled to specific settings and harder to adapt to the reasoning and application constraints of probabilistic models.

\begin{figure}[t]
\centering
\vspace{0.15in}
\scalebox{1.2}{
\begin{nineptscope}
\begin{tikzpicture}[font=\scriptsize]
\usetikzlibrary{calc,fit,backgrounds}

\tikzstyle{arrow1}=[->,>=stealth,black]
\tikzstyle{arrow2}=[thick,->,>=stealth,darkmain]

\tikzstyle{inputRec}=[
rectangle, draw,
minimum width=1.45cm, minimum height=1.15cm,
inner sep=2pt, outer sep=0pt,
align=center, text width=1.4cm]

\tikzstyle{outputRec}=[
rectangle, draw,
minimum width=0.85cm, minimum height=1.0cm,
inner sep=2pt, outer sep=0pt,
align=center, text width=0.62cm]

\tikzstyle{innerRec}=[
rectangle, draw=black,
minimum width=1.70cm, minimum height=0.44cm,
inner sep=2pt, outer sep=0pt,
align=center, text width=1.52cm]

\node at (-3.5, 0) [inputRec] (M1)
{\texttt{Probabilistic}\\[-1pt]\texttt{Datalog}\\[-1pt]Program $P$};

\node at (3.85, 0) [outputRec] (M3)
{\texttt{MUS}\\[-1pt]Set $S$};

\node at (-1.20, 0) [rectangle, draw,
minimum width=2.1cm, minimum height=3cm,
inner sep=0pt, outer sep=0pt] (D2) {};

\node at (1.70, 0) [rectangle, draw,
minimum width=2.7cm, minimum height=3cm,
inner sep=0pt, outer sep=0pt] (D3) {};

\begin{scope}[on background layer]
\node[rectangle, draw, thick, fit=(D2)(D3),
inner xsep=0.18cm, inner ysep=0.16cm] (B1) {};
\end{scope}

\node[anchor=west] at ($(D2.north west)+(0,-0.23)$) {\small \textbf{Pre-processing}};

\node at (-1.20, 0.2) [innerRec, fill=MistyRose] (A1)
{\scriptsize	 Probabilistic \\ Dependencies \\ \& Datalog Derivations};

\node at (-1.20, -0.86) [innerRec,  fill=MistyRose] (A12)
{\scriptsize	 Probabilistic Encoding};

\draw[arrow1] (A1.south) -- (A12.north);

\node[anchor=west] at ($(D3.north west)+(0.1,-0.23)$) {\small \textbf{Generating MUSes}};


\node at (1.70, 0.5) [rectangle, draw=black,fill=lightestmain,
minimum width=2.3cm, minimum height=0.95cm,
inner sep=0pt, outer sep=0pt] (P1) {};

\node at (1.70, 0.6) [rectangle, draw=darkmain, fill=white,
minimum width=1.2cm, minimum height=0.44cm,
text width=1.2cm, inner sep=1.5pt, outer sep=0pt,
align=center] (G1)
{ Derivative Sampling};

\node at (1.70, 0.16) {\tiny \circled{1} $~$  \textbf{Prioritize Search Space}};

\node at (1.70, -0.8) [rectangle, draw=black,fill=lightestmain,
minimum width=2.3cm, minimum height=1.15cm,
inner sep=0pt, outer sep=0pt] (P2) {};

\node at (1.70, -0.7) [rectangle, draw=darkmain, fill=white,
minimum width=1.2cm, minimum height=0.44cm,
text width=1.2cm, inner sep=1.5pt, outer sep=0pt,
align=center] (G2)
{Bottom-up\\ \texttt{UNSAT} \\ Inference};

\node at (1.57, -1.25) {\tiny  \circled{2} $~$   \textbf{Prune Search Space}};

\draw[arrow1] ($(D2.east)+(0,0.08)$) --
node[midway, above] {\tiny $\mathcal{G},\Phi$}
($(D3.west)+(0,0.08)$);

\draw[arrow2] ($(P1.south) + (-0.15,0)$) -- ($(P2.north) + (-0.15,0)$);
\draw[arrow2] ($(P2.north) + (0.15,0)$) -- ($(P1.south) + (0.15,0)$);

\draw[arrow1] (M1.east) -- (B1.west);
\draw[arrow1] (B1.east) -- (M3.west);

\end{tikzpicture}
\end{nineptscope}
}
\caption{\NAME -- Overview of generating MUSes from the probabilistic Datalog program $P$.  }
\vspace{-0.1in}
\Description{An overview diagram.}
\label{fig:sys}
\end{figure}

To address these gaps, we propose \NAME\footnote{Abbreviation of \emph{Probabilistic Paradox Probe}.}, a \emph{sound} and \emph{efficient} \MUS generator for probabilistic Datalog programs. 
By \emph{sound}, we mean that every \MUS produced by \NAME corresponds to a genuine conflict: the identified facts are mutually exclusive because their derivations rely on fundamentally incompatible conditions, and therefore cannot hold simultaneously in any valid interpretation. 
Figure~\ref{fig:sys} provides an overview of the approach. Given a probabilistic Datalog program $P$, \NAME first invokes a standard solver to preprocess $P$, producing the queried output facts and the associated derivation graph $\mathcal{G}$. 
These artifacts are then used to construct a constraint set $\Phi$ that encodes the probabilistic semantics of $P$, over which \NAME identifies a collection of \texttt{MUSes}, each corresponding to a minimal subset of output facts whose conjunction is unsatisfiable under $\Phi$.
%

%
%
%

To address the exponential cost of checking candidate subsets for unsatisfiability and minimality, \NAME employs two complementary techniques.
The first, \emph{derivation-aware sampling}, prioritizes the search by exploiting negative structural and statistical dependencies in the derivation graph $\mathcal{G}$.
By favoring relations that depend on shared inputs in opposite ways, the sampler steers candidate generation toward high-conflict regions that are more likely to be \UNSAT.
In addition, \NAME performs static analysis to partition independent output facts into disjoint sets and restricts sampling within each set, further increasing the chance of discovering unsatisfiable candidates.



Our second technique,  \emph{bottom-up \MUS inference}, aggressively prunes the search space by deriving new conflicts from previously detected ones. 
Through a process of logical replacement, \NAME traverses the derivation graph to systematically replace facts in a known \MUS with their derivation ancestors. 
This mechanism allows the framework to synthesize additional \texttt{MUSes} without the overhead of SAT solver invocations, substantially reducing the total number of calls and accelerating the enumeration process.

We implemented these techniques in \NAME and evaluated them on \textbf{70} benchmarks spanning side-channel (\textbf{SC}) vulnerability detection~\cite{wang2025OOPSLA, zhang2018scinfer}, data race (\textbf{DR}) analysis~\cite{Raghothaman2018,li2025combining}, semantic diffing (\textbf{SD})~\cite{Sung2018}, and Bayesian network inference~\cite{bench-munin}.
We compare \NAME against three state-of-the-art \MUS enumerators: \MARCO~\cite{Liffiton2016FastFlexibleMUS}, \allowbreak \REMUS~\cite{Bendik2018ReMUS}, and \TOME~\cite{Bendik2016TOME}.
%
%
Under a 30-minute time budget, \NAME discovers substantially more \texttt{MUSes} than these baselines, including average improvements of \textbf{6.6}$\times$ and \textbf{5.9}$\times$ over \MARCO, and \textbf{18}$\times$ and \textbf{65}$\times$ over \REMUS, on \textbf{SC} and \textbf{DR}, respectively.
%
%
%
To demonstrate practical utility, we further apply \NAME to three program analysis tasks (\textbf{SC}, \textbf{DR}, and \textbf{SD}) and use the detected \texttt{MUSes} to filter logically inconsistent critical queries.
This \MUS-based filtering significantly reduces the diagnostic search space, with average reductions of \textbf{61\%} in \textbf{DR} and \textbf{69\%} in \textbf{SC}.

\noindent\textbf{Contributions.}
In summary, this paper makes the following contributions:

\begin{itemize}[leftmargin=*]
    \item We propose a \emph{sound} and \emph{efficient} \MUS generator, \NAME, tailored for extracting conflicts from probabilistic Datalog programs.
    
    
    \item We develop a derivation-aware sampling technique that utilizes negative structural and statistical dependencies in the derivation graph to prioritize candidate generation.
    
    \item We present a bottom-up \MUS inference technique that proactively derives additional \texttt{MUSes} via logical replacement, pruning the exponential search space without additional solver overhead.
    \item We demonstrate the effectiveness of our approach on \textbf{70} real-world benchmarks, achieving average throughput improvements of up to \textbf{65$\times$} over state-of-the-art tools, while indicating a potential \textbf{47.7\%} average reduction in false alarms.
    
\end{itemize}

The remainder of this paper is organized as follows. Section~\ref{sec:motivation} motivates our work with illustrative examples. Section~\ref{sec:prelim} introduces the necessary preliminaries and background. Section~\ref{sec:overview} provides an overview of our approach. Sections~\ref{sec:derv_sample} and~\ref{sec:unsatInference} describe our techniques for prioritizing and pruning the search space. We present experimental results in Section~\ref{sec:eval}, discuss related work in Section~\ref{sec:related}, and conclude in Section~\ref{sec:conclusion}.

\section{Motivation}
\label{sec:motivation}

We illustrate the challenges of detecting logical conflicts in probabilistic Datalog using power side-channel analysis—a foundational approach for evaluating software security~\cite{Raghothaman2018, wang2025OOPSLA, heo2019continuously, li2025combining}.
By adapting a simplified side-channel detection framework~\cite{zhang2018scinfer}, we demonstrate how probabilistic reasoning introduces subtle, hidden conflicts that are easily overlooked in large-scale analyses.

%
%
%

\begin{figure}[htbp]
\vspace{-0.1in}
\centering

\definecolor{commentgray}{rgb}{0.5,0.5,0.5}
\definecolor{keywordblue}{rgb}{0.1,0.1,0.9}
\lstset{
    language=Prolog,
    basicstyle=\ttfamily\footnotesize,
    breaklines=true,
    captionpos=b,
    commentstyle=\color{commentgray},
    emph={inRand, inKey, and, xor, noIntersect, leak, rand, sid, query},
    emphstyle=\color{keywordblue},
    numbers=left,
    numberstyle=\tiny\color{commentgray},
    stepnumber=1,
    numbersep=10pt,
    frame=none,
    mathescape=true, 
    xleftmargin=12pt
}
\begin{lstlisting}
% Datalog facts (input): user-provided input type annotations
1.0::inRand(r1). 1.0::inRand(r2). 1.0::inRand(r3). 1.0::inKey(k). 

% Datalog facts (input): program semantics
1.0::and(c5,c3,r3). 1.0::and(c2,k,c1). 1.0::and(c3,c2,r2). 
1.0::and(c4,c1,c2). 1.0::xor(c1,k,r1). 1.0::noIntersect(c2,r2). 

% Datalog rules for inferring output types (rand, sid, key)
0.9::leak(OP1) :- inKey(OP1). % Rule 1
0.8::rand(OP1) :- inRand(OP1). % Rule 2
1.0::rand(O) :- inKey(OP1), inRand(OP2), xor(O,OP1,OP2). % Rule 3
0.7::leak(O) :- leak(OP1), rand(OP2), and(O,OP1,OP2). % Rule 4
0.8::sid(O)  :- $\neg$leak(OP1), rand(OP2), and(O,OP1,OP2), noIntersect(OP1,OP2). % Rule 5

% Datalog queries
query(leak(_)). % leak(c2)=0.63, leak(c3)=0.35, ...
query(sid(_)).  % sid(c3) =0.064
\end{lstlisting}
\vspace{-0.1in}
\caption{Side-channel analysis in probabilistic Datalog.}
\label{fig:problog_motivate}
\vspace{-0.1in}
\end{figure}

\textbf{\emph{Example Program and Encoding.}}
Figure~\ref{fig:motivateProgram} (left) depicts a code fragment, \texttt{compute()}, which utilizes a secret input \texttt{k} and three random variables \texttt{r1}--\texttt{r3} to produce a masked output through intermediate values \texttt{c1}--\texttt{c5}.
Despite the masking, the implementation contains inherent side-channel risks. 
To evaluate these, we represent the program logic in a probabilistic Datalog model (Fig~\ref{fig:problog_motivate}) that tracks information flow through three relations: \texttt{leak(var)} indicates a potential information leak, \texttt{rand(var)} denotes a variable masked by pure randomness, and \texttt{sid(var)} indicates statistical independence from the secret.

The encoding has three parts.
First, the \textit{input facts} (lines 2--6) define variable types and statement semantics, including type annotations such as \texttt{inKey(k)} and instruction semantics such as \texttt{xor(c1,k,r1)}, which corresponds to line 1 of the original program in Figure~\ref{fig:motivateProgram} (left).
All input facts are prefixed with \texttt{1.0::}, indicating deterministic program semantics.
Second, the \textit{inference rules} (lines 9--13) propagate these properties.
Each rule is assigned a probability based on empirical domain knowledge about data dependencies~\cite{wang2025OOPSLA,Raghothaman2018}.
For example, Rule 4 assigns a 70\% probability that an \texttt{AND} operation on a leaky variable (\texttt{OP1}) and a random variable (\texttt{OP2}) also leaks secret information.
Finally, the \textit{queries} (lines 16--17) identify variables that are either leaking or independent.

To produce these results, a solver repeatedly applies these rules to the input facts until reaching a fixed point. 
Throughout this process, the system constructs a \textit{derivation graph} to track the provenance of each fact and the computation of associated probabilities.

\begin{figure*}[t]
\centering
\scalebox{0.9}{
\begin{nineptscope}
  \begin{tikzpicture}[
    x=0.55cm, y=1.1cm,
    box/.style={draw, minimum width=1.3cm, minimum height=0.45cm, align=center, fill=white, font=\ttfamily\scriptsize},
    input/.style={box},
    derived/.style={box, fill=lightgray},
    conflict/.style={box, fill=lightred, draw=red, thick},
    rule_label/.style={inner sep=1pt, font=\scriptsize, anchor=north east},
    arrow/.style={-{Stealth[scale=0.8]}, rounded corners, font=\tiny},
    highlight_green/.style={fill=lightgreen},
    highlight_orange/.style={fill=lightorange}
]

\node[align=left, font=\ttfamily\small] (code) at (-3, 1.3) {
    1 \ \textbf{bool} compute(\textbf{bool} k, \\
    2 \ \textbf{bool} r1, r2, r3) \{ \\
    3 \ \ \ \textbf{bool} c1, c2 \dots; \\
    4 \ \ \ \colorbox{lightgreen}{c1 = k $\oplus$ r1}; \\
    5 \ \ \ \ c2 = k $\wedge$ c1; \\
    6 \ \ \ \ c3 = c2 $\wedge$ r2; \\
    7 \ \ \ \colorbox{lightorange}{c4 = c1 $\wedge$ c2}; \\
    8 \ \ \ \ c5 = c4 $\wedge$ r3; \\
    9 \ \ \ \ ... \}
    
};

\node[derived] (sid) at (6.0, 2.5) {\small sid(c3)};

\node[derived] (leak5) at (18.5, 2.5) {\small leak(c5)};
\node[rule_label, xshift=-13mm] at (leak5.south) {\small Rule 4};

\node[rule_label, xshift=-8mm, yshift=3mm] at (sid.south) {\large Rule 5};

\node[conflict] (not_leak2) at (1.4, 1.7) {\small $\neg$leak(c2)};
\node[input] (rand_r2) at (4.4, 1.7) {\small rand(r2)};
\node[input] (and3) at (7.8, 1.7) {\small and(c3,c2,r2)};
\node[input] (nointer) at (11.6, 1.61) {\small noIntersect\\ \small(c2,r2)};

\node[rule_label, xshift=0mm, yshift=-0.5mm] at (not_leak2.south) {\small Rule 4};

\node[derived] (leak4) at (15, 1.7) {\small leak(c4)};

\node[rule_label, xshift=0mm, yshift=-0.5mm] at (leak4.south) {\small Rule 4};
\node[input] (rand_r3) at (18.5, 1.7) {\small rand(r3)};
\node[input] (and5) at (22, 1.7) {\small and(c5,c4,c3)};

\node[derived] (leak_k) at (1.4, 0.9) {\small leak(k)};
\node[input] (rand_c1) at (4.25, 0.9) {\small rand(c1)};
\node[input] (and2) at (7.5, 0.9) {\small and(c2,k,c1)};

\node[rule_label, xshift=0mm, yshift=-0.5mm,] at (leak_k.south) {\small Rule 1};
\node[rule_label, xshift=25mm, yshift=-0.5mm] at (rand_c1.south) {\small Rule 3};

\node[conflict] (leak2) at (15, 0.9) {\small leak(c2)};
\node[derived] (rand_c1_2) at (17.8, 0.9) {\small rand(c1)};
\node[input] (and6) at (21, 0.9) {\small and(c4,c2,c1)};

\node[input] (inkey) at (1.4, 0.1) {\small inKey(k)};
\node[input] (inrand) at (4.25, 0.1) {\small inRand(r1)};
\node[input] (xor1) at (7.6, 0.1) {\small xor(c1,k,r1)};

\node[font=\ttfamily\scriptsize] (dot1) at (15, 0.1) {...};
\node[font=\ttfamily\scriptsize] (dot2) at (17.8, 0.1) {...};

\draw[arrow] (sid.south) -- (nointer.north);
\draw[arrow] (sid.south) -- (and3.north);
\draw[arrow] (sid.south) -- (rand_r2.north);
\draw[arrow] (sid.south) -- (not_leak2.north);
\draw[arrow] (not_leak2.south) -- (leak_k.north);
\draw[arrow] (not_leak2.south) -- (and2.north);
\draw[arrow] (not_leak2.south) -- (rand_c1.north);
\draw[arrow] (leak_k) -- (inkey);
\draw[arrow] (rand_c1.south) -- (xor1.north);
\draw[arrow] (rand_c1.south) -- (inrand.north);
\draw[arrow] (rand_c1.south) -- (inkey.north);

\draw[arrow] (leak5.south) -- (leak4.north);
\draw[arrow] (leak5.south) -- (rand_r3.north);
\draw[arrow] (leak5.south) -- (and5.north);

\draw[arrow] (leak4.south) -- (leak2.north);
\draw[arrow] (leak4.south) -- (rand_c1_2.north);
\draw[arrow] (leak4.south) -- (and6.north);

\draw[arrow] (leak2.south) -- (dot1.north);
\draw[arrow] (rand_c1_2.south) -- (dot2.north);

\draw[red, thick, Stealth-Stealth] ($(sid.east) + (0.2, 0)$) -- node[above=1pt, font=\bfseries\small] {Conflict} ($(leak5.west) + (-0.2, 0)$);


\end{tikzpicture}
\end{nineptscope}
}
\caption{
Left: an example program adapted from~\cite{zhang2018scinfer}.
Right: its derivation graph, where white nodes are input facts and gray nodes are derived facts.
The red highlights show a conflict: one branch requires $\neg\mathsf{leak(c2)}$, while the other requires $\mathsf{leak(c2)}$, so $\mathsf{sid(c3)}$ and $\mathsf{leak(c5)}$ cannot hold together.
Rule numbers refer to Fig.~\ref{fig:problog_motivate}.
}
\Description{Motivating example where left is an example program adapted from~\cite{zhang2018scinfer}.}
\label{fig:motivateProgram}
\vspace{-1em}
\end{figure*}

\textbf{\emph{Derivation Graph and Conflicts.}} Figure~\ref{fig:motivateProgram} (right) shows the derivation graphs for \texttt{sid(c3)} and \texttt{leak(c5)}.
White nodes denote input facts, while gray and red nodes denote intermediate or output relations.
These two outputs are logically inconsistent: \texttt{sid(c3)} can be derived only if \texttt{leak(c2)} is false (Rule 5), whereas \texttt{leak(c5)} requires \texttt{leak(c2)} to be true (Rule 4).

Probabilistic Datalog, however, uses possible-worlds semantics where uncertain facts like \texttt{leak(c2)} are assigned marginal probabilities instead of binary truth values. 
Because \texttt{leak(c2)} may be true in some possible worlds and false in others, both \texttt{sid(c3)} and \texttt{leak(c5)} can appear simultaneously in query results. 
Standard solvers typically report these probabilities independently without signaling logical incompatibility, which can be misleading in security-critical contexts where mutually exclusive results represent fundamentally different program behaviors.

\textbf{\emph{Impact and Summary.}} Without conflict awareness, developers must inspect many spurious alarms, often spending effort on combinations that cannot occur in the same execution, such as \texttt{sid(c3)} and \texttt{leak(c5)} being true together.
This problem becomes especially severe in large-scale quantitative analyses, where the output volume makes manual conflict detection impractical.
Although probabilistic Datalog is effective for modeling uncertainty~\cite{wang2025OOPSLA,Raghothaman2018, heo2019continuously, li2025combining}, it does not explicitly expose internal logical inconsistencies, which can lead to misleading results.
Our work addresses this gap with a sound and efficient \MUS generator that automatically extracts such conflicts.
By exposing these hidden inconsistencies, \NAME helps developers focus on plausible vulnerabilities and improves the precision and usability of probabilistic program analyses.

\section{Preliminaries}
\label{sec:prelim}

We introduce the key concepts used throughout this paper.  
We first define probabilistic Datalog programs and their grounded solutions, then describe how they are represented using derivation graphs, and finally formulate conflicts as \texttt{MUSes}.

\subsection{Probabilistic Datalog Programs}
\label{sec:language}
A probabilistic Datalog program consists of a set of rules $R$, where each rule is a Horn clause annotated with a probability:
$
p :: h(\vec{x}) \colonminus \odot b_1(\vec{y_1}), \ldots, \odot b_n(\vec{y_n}),
$
where $h(\vec{x})$ is the \emph{head}, each $b_i(\vec{y_i})$ is a \emph{body predicate}, and $\odot$ denotes an optional negation operator.  
The probability $p \in [0,1]$ specifies how likely the head holds when the body is satisfied.
A probabilistic Datalog program also assumes stratified negations in order to compute the least fix-point within limited time, meaning that recursive negations in rules are not allowed.

When the body of a rule is empty, it becomes an \emph{input fact declaration}, e.g., \texttt{1.0::inRand(r1)} in Figure~\ref{fig:problog_motivate}.  
A program also contains one or more \emph{queries}, which specify the output relations of interest.  
The solution to a program is a set of \emph{grounded output facts}, obtained by repeatedly applying rules to input facts until reaching a fixed point over the Herbrand universe.

%

For example, in Figure~\ref{fig:motivateProgram}, all \texttt{leak(OP1)} relations are grounded by substituting \texttt{OP1} with concrete variable names (e.g., \texttt{c2,c5}), shown as gray boxes in the figure.

\paragraph{\textbf{Derivation Graph}}
The grounded solution of a probabilistic Datalog program can be represented as a \emph{derivation graph}, which captures how each output fact is derived from input facts.

\begin{definition}[Derivation Graph]
\label{def:derivation-graph}
A derivation graph for a grounded probabilistic Datalog solution is a hypergraph $G=(V,E)$ where:
\begin{itemize}[leftmargin=*]
    \item $V$ is the set of nodes, each representing a grounded relation (fact). 
    \item $E$ is the set of directed hyperedges, where each hyperedge is a tuple $(p,h,B^+,B^-,r)$ representing a grounded rule $r$ with probability $p$, head $h$, positive body $B^+$, and negative body $B^-$. Each hyperedge represents a fully grounded rule, where every predicate (e.g., $h(\vec{x})$) has concrete terms.  

\end{itemize}
\end{definition}

The source of a hyperedge is its head $h$, and the targets are the union of body predicates $B^+ \cup B^-$.
 $B^+$ denote the set of non-negated body predicates and $B^-$ the set of negated body predicates.

\begin{example}
    In the derivation graph of Figure~\ref{fig:motivateProgram}, the hyperedge for inferring \texttt{sid(c3)} has the head $h$ = \texttt{sid(c3)}, $B^+=\{$\texttt{rand(r2)}, \texttt{and(c3,c2,r2)}, \texttt{noIntersect(c2,r2)}$\}$, $B^-=\{$\texttt{leak(c2)}$\}$, $p=0.8$ and rule $r$ is noted as Rule 5.
\end{example}

\emph{Derivation Set of a Node.}  
For a node $v \in V$, we define its \emph{derivation set} as:
$
\texttt{Derv}(v) = \{e \mid e\in E, ~e=(p,v,B^+,B^-,r)\}
$,
consisting of hyperedges.

\subsection{Boolean Encoding of the Grounded Solution}

To reason about satisfiability, we encode the derivation graph as Boolean constraints.  
This enables \texttt{SAT} and \UNSAT reasoning over sets of grounded facts.

\begin{definition}[Mapping Rules to Boolean Variables]
\label{def:boolean-vars}
Given a derivation graph $G=(V,E)$, each grounded rule $r$ is mapped by $S$ to a Boolean variable $b_r$ that follows a Bernoulli distribution, where $P(b_r = \text{true}) = p$ and $p$ is the probability associated with $r$.
\end{definition}

\begin{definition}[Mapping Relations to Boolean Variables]
\label{def:gamma}
Each grounded relation (node $v \in V$) is mapped by $\Gamma$ to a Boolean variable $b_v$, indicating whether the fact $v$ holds.
\end{definition}

\begin{definition}[Boolean Interpretation]
\label{def:bool-interpret}
Given a derivation graph $G=(V,E)$, a \emph{Boolean interpretation} is a tuple $T(G) = (\Gamma, S, \Phi)$ where:
\begin{align*}
\Gamma &= \{v \mapsto b_v \mid v \in V, b_v \in \mathbb{B}\},\\
S &= \{r \mapsto b_r \mid (p,h,B^+,B^-,r) \in E, b_r \in \mathbb{B}\},\\
\Phi &= \Big\{c \mid v \in V,\ c \triangleq \Gamma[v] = D_v,\\
&\hspace{1em} D_v \triangleq \bigvee_{\substack{(p,v,B^+,B^-,r)\\ \in\, \texttt{Derv}(v)}}
\Big(
S[r] \land \bigwedge_{b^+ \in B^+} \Gamma[b^+]
\land \bigwedge_{b^- \in B^-} \neg \Gamma[b^-]
\Big)
\Big\}.
\end{align*}
\end{definition}

The Boolean encoding $\Phi$ follows the possible-world semantics of
probabilistic Datalog.
For each grounded probabilistic rule instance $r$, the map $S$
introduces a fresh Boolean event variable $S[r]$ indicating whether that rule
instance is enabled in a possible world.
Thus, a derived fact $v$ holds only if at least one of its grounded derivations
is enabled: the corresponding rule event $S[r]$ must be true, all positive body
facts must hold, and all negated body facts must fail.
This differs from deterministic Datalog, where a satisfied rule body directly
entails the head without an additional rule-event variable.
In this way, the rule-event variables give $\Phi$ an abstraction of the
possible-world structure that does not involve the concrete numeric probability
weights.

This abstraction is sufficient for conflict extraction.
The probability annotation $p$ in a rule of the form $p::r$
(Section~\ref{sec:language}) is needed only for probabilistic inference: it
specifies the Bernoulli weight of the rule-event variable $S[r]$ when
computing marginal query probabilities.
Conflict extraction, in contrast, asks a purely Boolean question over $\Phi$:
whether some assignment to the rule-event and fact variables makes a set of
output facts jointly true.
Since the numeric weights affect only how likely a satisfiable world is, not
whether one exists, conflict extraction loses no information by operating on
the abstraction alone.



\subsection{Conflicts and MUS Formulation}
\label{sec:conflict-encoding}

We now formalize conflicts among output facts using \texttt{SAT} reasoning over the Boolean interpretation of a probabilistic Datalog derivation graph.

\begin{definition}[Joint Satisfiability]
\label{def:joint-sat}
Given a derivation graph $G=(V,E)$ and its Boolean interpretation $T=(\Gamma,S,\Phi)$, a set of nodes $V' \subseteq V$ is \emph{jointly satisfiable} if the constraint
\[
\Psi(V') = \Big(\bigwedge_{\phi \in \Phi} \phi\Big) \land \Big(\bigwedge_{v \in V'} \Gamma[v]\Big)
\]
is \texttt{SAT}.  
Otherwise, $V'$ is \emph{jointly unsatisfiable}.

Definition~\ref{def:joint-sat} checks whether the output facts in
$V'$ can co-occur in at least one possible world of the probabilistic
Datalog program. The constraint $\Phi$ encodes the logical structure of
those possible worlds, including the Boolean event variables introduced
for probabilistic rule instances. Hence, joint unsatisfiability captures
a genuine semantic conflict: no assignment to the probabilistic rule
events can make all facts in $V'$ true simultaneously. Importantly, this
notion depends on the structure of the probabilistic derivations, but
not on the concrete numeric probabilities assigned to the rule events.
The numeric weights affect how likely a satisfiable set of facts is, but
they do not affect whether such a set is logically realizable.



\end{definition}

\begin{example}
\label{example:correctness}
In the derivation graph of Figure~\ref{fig:motivateProgram}, we use
$r_5$ and $r_4$ to denote Rule~5 and Rule~4, respectively.
The set $\textcolor{darkmain}{\Phi}$ includes the following constraints:
\[
\begin{array}{rcl}
\textcolor{darkermain}{\phi_1}: &
\mathsf{sid(c3)} &= \begin{aligned}[t]
  &S(r_5)
  \wedge \underline{\neg \mathsf{leak(c2)}}
  \wedge \mathsf{rand(r2)} 
  \wedge \mathsf{and(c3,c2,r2)} \\
  &\wedge \mathsf{noIntersect(c2,r2)}
\end{aligned}, \\[2pt]

\textcolor{darkermain}{\phi_2}: &
\mathsf{leak(c5)} 
&= S(r_4)
\wedge \mathsf{leak(c4)}
\wedge \mathsf{rand(r3)}
\wedge \mathsf{and(c5,c4,c3)}, \\[2pt]

\textcolor{darkermain}{\phi_3}: &
\mathsf{leak(c4)} 
&= S(r_4')
\wedge \underline{\mathsf{leak(c2)}}
\wedge \mathsf{rand(c1)}
\wedge \mathsf{and(c4,c2,c1)} .
\end{array}
\]

For $V' = \{\mathsf{sid(c3)}, \mathsf{leak(c5)}\}$, the constraint $\Psi(V')$ is encoded as
\[
\textcolor{darkermain}{\phi_1} \wedge \textcolor{darkermain}{\phi_2} \wedge \textcolor{darkermain}{\phi_3}
\wedge \mathsf{sid(c3)} \wedge \mathsf{leak(c5)}.
\]
For brevity, we omit the remaining constraints in $\Phi$.
As shown above, $\Psi(\{\mathsf{sid(c3)},\allowbreak \mathsf{leak(c5)}\})$ is \texttt{UNSAT}, since both
\underline{$\mathsf{leak(c2)}$}and
\underline{$\neg \mathsf{leak(c2)}$}
appear in the conjunction.
Hence, $\{\mathsf{sid(c3)}, \mathsf{leak(c5)}\}$ is jointly \UNSAT.

\end{example}

\begin{definition}[Minimal Unsatisfiable Subset]
\label{def:mus-bool}
A set of nodes $V' \subseteq V$ is a \emph{Minimal Unsatisfiable Subset (\MUS)} if:
(1) $\Psi(V')$ is UNSAT, and
(2) for every strict subset $V'' \subsetneq V'$, $\Psi(V'')$ is SAT.

\end{definition}

The set $\{\mathsf{sid(c3)}, \mathsf{leak(c5)}\}$ is also a \emph{minimal unsatisfiable subset},
since each of its strict subsets, $\{\mathsf{sid(c3)}\}$ and $\{\mathsf{leak(c5)}\}$, is \texttt{SAT}.


An MUS represents the smallest group of grounded facts that cannot all be true simultaneously.  
Identifying MUSes allows us to pinpoint minimal conflicts within the program’s probabilistic reasoning.
%
Given a derivation graph $G=(V,E)$, our goal is to enumerate all MUSes $V' \subseteq V$ that satisfy Definition~\ref{def:mus-bool}, thereby revealing all minimal conflicts in the grounded solution.

\section{Overview of \NAME}
\label{sec:overview}

We present a high-level overview of \NAME, a framework for enumerating minimal unsatisfiable subsets (\texttt{MUSes}) in probabilistic Datalog programs (Algorithm~\ref{alg:mus-enumeration}).
Unlike generic \MUS enumeration methods, \NAME exploits the derivation graph to capture recursive and probabilistic dependencies among facts.
At a high level, it repeatedly finds an \texttt{UNSAT} candidate set $C$ of output facts and shrinks it to a \MUS.

Given a probabilistic Datalog program $\mathbb{P}$, we first invoke a standard solver (e.g., \textsc{ProbLog}~\cite{De2007}) to obtain the queried output facts $\mathcal{O}$ and derivation graph $\mathcal{G}$.
\NAME then encodes $\mathcal{G}$ as Boolean constraints $\Phi$ (Line~\ref{line:encode}) and initializes the search space $\mathsf{Unexp}$ (\emph{Unexplored}) as the power set of $\mathcal{O}$.

\begin{algorithm}[htbp]
{
\footnotesize
\caption{Overview of \NAME  }
\label{alg:mus-enumeration}
\KwIn{
Queried output facts $\mathcal{O}$ and
the derivation graph $\mathcal{G}$
}
\KwOut{All minimal unsatisfiable subsets (\texttt{MUSes})}

\Begin{
    %
    $\mathcal{S} \gets \emptyset$  \textcolor{myblue}{\tcp*{$\mathcal{S}$: the set of MUSes}} \label{line:parse} 
    $\Phi \gets \textsc{Encode}(\mathcal{G})$  \textcolor{myblue}{\tcp*{Generate the encoding of $\mathcal{G}$ (Def~\ref{def:bool-interpret})}} \label{line:encode} 
    $\mathsf{Unexp} \gets \textsc{Powersets}(\mathcal{O},\mathcal{G})$ \label{line:init-unexplored} \\
    $\mathsf{quota} \gets |\mathcal{O}|$ \\
    \While{$\mathsf{Unexp} \neq \emptyset$}{ \label{line:while-unexplored}
        \colorbox{myorange!45}{$\mathcal{C} \gets \textsc{DerivativeSampling}(\mathsf{Unexplored}, \mathsf{quota}, \mathcal{O}, \mathcal{G})$} \label{line:derivative-sampling} 
        \tcp*{\textcolor{myorange!88}{$\bigstar_1:$} Get unexplored cases \textcolor{myorange}{(Sec~\ref{sec:derv_sample})}}
        \If
        (\textcolor{myblue}{\hspace{0.4in} \\
        \tcp*[h]{Check if the combination $\mathcal{C}$ is $\mathsf{UNSAT}$ (Def~\ref{def:joint-sat})}}) 
        {
        $\mathsf{CheckSat}(\mathcal{C}, \Phi) = \mathsf{UNSAT}$}{ \label{line:checkSAT}
            $\mathcal{M} \gets \textsc{Shrink}(\mathcal{C})$
            \label{line:shrink} \\
                $\mathsf{Unexp} \gets \mathsf{Unexp} \setminus (\mathsf{Supersets}(\mathcal{M})\cup \mathsf{Subsets}(\mathcal{M}))$ \\ \label{line:prune-unsat} 
                \colorbox{mypurple!40}{$\mathcal{S} \gets \mathcal{S} \cup \textsc{InferMUS}(\mathcal{M}, \mathsf{Unexp},\mathcal{S})$} \\
                \tcp*[h]{\textcolor{mypurple!88}{$\bigstar_2:$} Bottom-up inference \textcolor{mypurple}{(Sec~\ref{sec:unsatInference})} }\label{line:infer-mus}
                
                $\mathsf{quota} \gets \textsc{ReduceOrMaintainQuota}(\mathsf{quota})$\label{line:reduce-or-maintain-quota} \\
                \tcp*[h]{Half or maintain the quota 
                }\label{line:halfQuota}
        } \Else {
            $\mathsf{Unexp} \gets \mathsf{Unexp} \setminus \mathsf{Subsets}(\mathcal{C})$\; \label{line:prune-sat}
            $\mathsf{quota} \gets \textsc{IncreaseQuota}(\mathsf{quota})$
            \label{line:increase-quota} \textcolor{myblue}{\tcp*{Double the quota 
            }}\label{line:doubleQuota}
            }
        }
        \Return{$\mathcal{S}$}
}
}
\end{algorithm}


\NAME is designed to leverage the structural and statistical information in the derivation graph $\mathcal{G}$ to both prioritize candidate exploration (\textcolor{myorange!88}{$\bigstar_1$}, Line~\ref{line:derivative-sampling}) and aggressively prune the search space (\textcolor{mypurple!88}{$\bigstar_2$}, Line~\ref{line:infer-mus}).
This leads to two main contributions.

First, \NAME identifies \emph{negative statistical dependencies} in the derivation graph (Section~\ref{sec:derv_sample}) and prioritizes combinations of such nodes during candidate generation.
Intuitively, combining negatively dependent derivations is more likely to expose inconsistencies and produce \texttt{UNSAT} candidates.
This idea is realized by \textsc{DerivativeSampling} (Line~\ref{line:derivative-sampling}).

Second, once an MUS $M$ is found, \textsc{InferMUS} (Section~\ref{sec:unsatInference}) exploits the recursive structure of Datalog derivations to infer additional \texttt{UNSAT} candidates in a bottom-up manner.
By propagating conflicts through the derivation graph, \NAME guides the search toward promising regions and prunes likely \texttt{SAT} candidates.

\NAME also includes several mechanisms to improve efficiency.
Since \texttt{MUSes} are often much smaller than the full query set $|\mathcal{O}|$, \NAME dynamically adjusts the sampling quota\footnote{The quota is the number of facts selected in each invocation of \textsc{DerivativeSampling}.} based on satisfiability outcomes (Lines~\ref{line:halfQuota} and~\ref{line:doubleQuota}).
If a sampled candidate $\mathcal{C}$ is satisfiable under $\Phi$ (Line~\ref{line:checkSAT}), the quota is increased, and all subsets of $\mathcal{C}$ are pruned from $\mathsf{Unexp}$.
If a candidate $C$ is unsatisfiable, \NAME shrinks it to a \MUS $M$~\cite{Liffiton2016FastFlexibleMUS}; then, by Definition~\ref{def:mus-bool}, both subsets and supersets of $M$ are pruned from $\mathsf{Unexp}$.

Overall, \NAME tightly integrates \MUS enumeration with probabilistic Datalog semantics by exploiting derivation-graph structure and probabilistic dependencies for both candidate generation and pruning.
Sections~\ref{sec:derv_sample} and~\ref{sec:unsatInference} present these two main ideas.

\section{Derivative Sampling: Prioritizing Search}
\label{sec:derv_sample}

This section presents our first optimization for efficient \MUS generation: \emph{derivative sampling}.
The key idea is to exploit structural and statistical information in the derivation graph to guide exploration toward candidates more likely to be \texttt{UNSAT}, thereby accelerating \MUS discovery.

\subsection{Overview}

\begin{algorithm}[t]
{\footnotesize
\caption{\textsc{DerivativeSampling}}
\label{alg:multi-root-bfs}

\KwIn{Search space of unexplored combinations $\mathsf{U}$}
\KwIn{Derivation graph $\mathcal{G}=(V,E)$, queried output facts $\mathcal{O}$}
\KwIn{Budget $q$ (max \#visited nodes)}
\KwOut{A sampled combination $\mathcal{C}$}

\BlankLine
$\mathsf{Visited} \gets \emptyset$\;
$\mathsf{Roots} \gets$ \textsc{RandSelect}$(\mathcal{O})$\;\label{line:randSelect}
\colorbox{myyellow!45}{$\mathcal{G}' \gets$ \textsc{NegEdgeOnly}($\mathcal{G}$)}\label{line:neg_edge_only}\;\label{line:NegEdgeOnly}
$(V,E) \gets \mathcal{G}'$\;

\Repeat{$\mathsf{Visited}$ unchanged \textbf{or} $|\mathsf{Visited}|\ge q$}{
  $\mathsf{Frontier} \gets \mathsf{Roots} \setminus \mathsf{Visited}$\;
  \While{$\mathsf{Frontier} \neq \emptyset$ \textbf{and} $|\mathsf{Visited}| < q$}{ \label{line:traversal}
    $\mathsf{Visited} \gets \mathsf{Visited} \cup \mathsf{Frontier}$\label{line:union_frontier}\;
    $\mathsf{Next} \gets \{v \mid \exists u \in \mathsf{Frontier}. (u\rightarrow v)\in E \wedge v\notin \mathsf{Visited}\}$\label{line:next_frontier}\;
    $\mathsf{Frontier} \gets \textsc{Take}_{\min(q-|\mathsf{Visited}|,|\mathsf{Next}|)}(\mathsf{Next})$\;
  }

  \If{$|\mathsf{Visited}| < q$}{
    \colorbox{lightgreen}{$S' \gets \textsc{NegNode}(\textsc{LastRoot}(\mathsf{Roots}), \mathsf{Visited}, \mathcal{G})$}\label{line:neg_node}\;
    \If{$S' \notin \mathsf{Visited}$}{
      $\mathsf{Roots} \gets \mathsf{Roots} \cup \{S'\}$\;
    }
  }
}

$\mathsf{Pol} \gets
\{r:\texttt{True} \mid r \in \mathsf{Visited}\}
\cup
\{r:\texttt{False} \mid r \notin \mathsf{Visited}\}$\;

$\mathcal{C} \gets \textsc{Solve}(\mathsf{U},\mathsf{Pol})$\;
\KwRet{$\mathcal{C}$}\;
}
\end{algorithm}

\MUS enumeration is an iterative process of sampling and checking candidate combinations.
Its efficiency depends on the quality of sampled sets; while an \UNSAT combination contains at least one \MUS, a SAT result does not directly advance enumeration.

%


We observe that \texttt{SAT} solving performance degrades as the size of the input formula grows, while \texttt{MUSes} in probabilistic Datalog programs are typically small. This suggests that sampling small combinations with a high likelihood of being \texttt{UNSAT} is essential for efficiency.
Guided by this observation, \emph{Derivative Sampling} (Algorithm~\ref{alg:multi-root-bfs}) exploits the structure of the derivation graph and negative dependencies to achieve three goals: (1) \emph{prioritize negation} by identifying relations that are \underline{\emph{structurally}} likely to conflict (via \colorbox{myyellow!45}{\textsc{NegEdgeOnly}} in Line~\ref{line:NegEdgeOnly}); (2) \emph{limit formula size} by maintaining a small sampling quota~$q$ that bounds each candidate; and (3) \emph{target exploration} by restricting sampling to relations with shared negative (\underline{\emph{statistical}}) dependencies (via \colorbox{lightgreen}{\textsc{NegNode}} in Line~\ref{line:neg_node}).

As shown in Algorithm~\ref{alg:multi-root-bfs}, the procedure constructs a sampled combination $\mathcal{C}$ by strategically traversing the derivation graph $\mathcal{G}$. It begins by initializing the set of visited nodes $\mathsf{Visited}$ and selecting an initial root $\mathsf{Root}$ from the queried output facts $\mathcal{O}$ using \textsc{RandSelect}.

To emphasize relations that are more likely to induce conflicts, the algorithm applies \colorbox{myyellow!45}{\textsc{NegEdgeOnly}} (Algorithm~\ref{alg:neg-edge-only}) to transform $\mathcal{G}$ into a pruned graph $\mathcal{G}'$. This transformation retains only edges corresponding to negative dependencies. Concretely, for a grounded derivation such as
$0.6 :: h \colon\hspace{-0.2em}\mbox{-} \neg b, b_1$,
the original hyperedge $(0.6, h, \allowbreak\{b_1\}, \{b\}, r)$ in $\mathcal{G}$ yields a single negative edge $(h \rightarrow b)$ in $\mathcal{G}'$.

Given the negative-edge-only graph $\mathcal{G}'$, the algorithm performs a BFS within a \texttt{repeat-until} loop under a node budget $q$.
Nodes in the current frontier are added to $\mathsf{Visited}$, and the next frontier is computed by following edges in $\mathcal{G}'$.
If the frontier becomes empty before reaching $q$, the algorithm invokes \textsc{NegNode} (Algorithm~\ref{alg:neg-node}) to select a new node negatively related to the current search path, increasing the chance of generating an \texttt{UNSAT} candidate.

Finally, the algorithm assigns polarities $\mathsf{Pol}$ based on membership in $\mathsf{Visited}$.
Relations in $\mathsf{Visited}$ are more likely to participate in conflicts (\MUS), so their polarity is set to \texttt{true} to prioritize them during sampling, while relations with polarity \texttt{false} are excluded.
These polarity assignments, together with the unexplored search space $\mathsf{U}$, are passed to the solver to generate an unexplored candidate combination $\mathcal{C}$ for satisfiability checking.

\subsection{
Sampling via Negative Dependencies
}


In the overview (Algorithm~\ref{alg:multi-root-bfs}), derivative sampling uses two forms of \emph{negative} dependency—\underline{structural} and \underline{statistical}—to guide candidate selection.
This section describes the corresponding procedures: \colorbox{myyellow!45}{\textsc{NegEdgeOnly}} and \colorbox{lightgreen}{\textsc{NegNode}}.

\begin{algorithm}[h]
\caption{\colorbox{myyellow!45}{\textsc{NegEdgeOnly}}}
\label{alg:neg-edge-only}
{\footnotesize
\KwIn{$\mathcal{G}$}
\KwOut{$\mathcal{G}'$}
$(V,E) \gets \mathcal{G}$\;
$E' \gets \{\, (h \rightarrow b) \mid (\_,h,B^+,B^-,\_) \in E ~\wedge~ b \in B^- \,\}$\;
\KwRet{$(V,E')$}\;
}
\end{algorithm}

\vspace{-0.2in}

\begin{algorithm}[h]
\caption{\colorbox{lightgreen}{\textsc{NegNode}}}
\label{alg:neg-node}
{\footnotesize
\KwIn{$\mathsf{Root}$, $\mathsf{Visited}$, $\mathcal{G}$}
\KwOut{$\mathsf{Root}'$}

\BlankLine
$S^+ \gets
  \{\, r \mid
    \mathsf{CalDep}(r,\mathcal{G},+) \cap \mathsf{CalDep}(\mathsf{Root},\mathcal{G},-) \neq \emptyset
  \,\}$\;
$S^- \gets
  \{\, r \mid
    \mathsf{CalDep}(r,\mathcal{G},-) \cap \mathsf{CalDep}(\mathsf{Root},\mathcal{G},+) \neq \emptyset
  \,\}$\;
$\mathsf{Root}' \gets (S^+ \cup S^-) \setminus \mathsf{Visited}$\;
\KwRet{$\mathsf{Root}'$}\;
}
\end{algorithm}

\paragraph{\textsc{NegEdgeOnly}}
The \textsc{NegEdgeOnly} procedure extracts immediate negative dependencies from the derivation graph $\mathcal{G}$. Consider a hyperedge $(\_, h, B^+, B^-, \_) \in \mathcal{G}$, where $h$ is the head of a grounded rule and $B^+$ and $B^-$ denote its positive and negative body literals, respectively. In the filtered graph, we retain only the flattened negative edges $(h \rightarrow b)$ for each $b \in B^-$. Intuitively, such a negative derivation indicates that the truth of $h$ depends on the falsity of $b$, which increases the likelihood of inconsistency when both are selected. As a result, \textsc{NegEdgeOnly} encourages sampling combinations that include both $h$ and $b$.


\begin{algorithm}[htbp]
\caption{\textsc{CalDep}}
\label{alg:caldep}
{\footnotesize
\KwIn{$r$, $\mathcal{G}$, $\odot \in \{+,-\}$}
\KwOut{$\mathsf{o}$}
$\mathsf{o} \gets \emptyset$\;
\If{$r$ is an input fact}{
  $\mathsf{o} \gets
    \begin{cases}
      \{r\}, & \text{if } \odot = +, \\
      \emptyset, & \text{otherwise}
    \end{cases}$\;
}
\Else{
  \For{$(\_, r, B^+,B^-,\_) \in \mathcal{G}$}{
    \For{$b \in B^+$}{
      $\mathsf{o} \gets
        \begin{cases}
          \mathsf{o} \cup \mathsf{CalDep}(b,\mathcal{G},+), & \text{if } \odot = +, \\
          \mathsf{o} \cup \mathsf{CalDep}(b,\mathcal{G},-), & \text{otherwise}
        \end{cases}$\;
    }
    \For{$b \in B^-$}{
      $\mathsf{o} \gets
        \begin{cases}
          \mathsf{o} \cup \mathsf{CalDep}(b,\mathcal{G},-), & \text{if } \odot = +, \\
          \mathsf{o} \cup \mathsf{CalDep}(b,\mathcal{G},+), & \text{otherwise}
        \end{cases}$\;
    }
  }
}
\KwRet{$\mathsf{o}$}\;
}
\end{algorithm}

\paragraph{\textsc{NegNode}}
While \textsc{NegEdgeOnly} captures local, edge-level negation, many conflicts in probabilistic Datalog arise from \emph{recursive} negative dependencies that span multiple derivation steps. The \textsc{NegNode} procedure addresses this by analyzing negative (\emph{statistical}) dependencies induced by the full derivation structure.

At a high level, two relations can only conflict if they depend on shared input facts in opposite ways. To formalize this intuition, we associate each relation $v$ with two dependency sets:
\begin{itemize}[leftmargin=*,labelsep=0.2em]
    \item $\mathsf{Dep}^+(v)$: input facts supporting $v$ being \emph{true} (positive deps);
    \item $\mathsf{Dep}^-(v)$: input facts supporting $v$ being \emph{false} (negative deps).
\end{itemize}

From a probabilistic perspective, these sets capture statistical influence: if $u \in \mathsf{Dep}^+(v)$, then $\Pr(v \mid u) > \Pr(v)$; if $u \in \mathsf{Dep}^-(v)$, then $\Pr(v \mid u) < \Pr(v)$. Algorithm~\ref{alg:caldep} computes $\mathsf{Dep}^{\odot}(r)$ for a relation $r$ and polarity $\odot \in \{+,-\}$ by recursively traversing the derivation graph. Specifically, for a grounded rule with positive body $B^+$ and negative body $B^-$, the dependencies propagate as follows:
\begin{equation}
\mathsf{Dep}^{\odot}(r)
=
\bigcup_{b \in B^+} \mathsf{Dep}^{\odot}(b)
\;\cup\;
\bigcup_{b \in B^-} \mathsf{Dep}^{\overline{\odot}}(b),
\end{equation}
where $\overline{\odot}$ denotes the opposite polarity. To avoid infinite recursion due to cyclic derivations, we iteratively update $\mathsf{Dep}^+$ and $\mathsf{Dep}^-$ until reaching fix-point.

Using these dependency sets, we define two relations $v_1$ and $v_2$ to be \emph{statistically negatively dependent} if
\[
(\mathsf{Dep}^+(v_1) \cap \mathsf{Dep}^-(v_2) \neq \emptyset)
\;\lor\;
(\mathsf{Dep}^+(v_2) \cap \mathsf{Dep}^-(v_1) \neq \emptyset).
\]

\begin{theorem}
\label{theorem:NegDep}
If two relations $v_1$ and $v_2$ are not \emph{statistically negatively dependent}, then $v_1 \wedge v_2$ is always \texttt{SAT}.
\end{theorem}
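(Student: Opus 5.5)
The plan is to build an explicit satisfying world for $\Psi(\{v_1,v_2\})$ by merging two worlds, one witnessing $v_1$ and one witnessing $v_2$. I read the statement as implicitly assuming that $v_1$ and $v_2$ are each individually \texttt{SAT}, i.e.\ each is a derivable output fact; otherwise the claim fails trivially.

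\textbf{Step 1: reduce to a function of the free variables.} I view a world as an assignment to the \emph{free} variables: the rule-event variables $S[r]$, including those of input-fact declarations, since an input fact $u$ satisfies $\Gamma[u]=S[r_u]$. Because negation is stratified, each such assignment extends to the fact variables $\Gamma[v]$ so that every constraint in $\Phi$ holds. I would take the stratum-wise least fixpoint, which also handles positive cycles. So it suffices to find free values under which the derived $v_1$ and $v_2$ are both true. For the rule-event variables I simply set every $S[r]$ of a non-input rule to true. By induction over strata this only helps positive derivations and is already part of any witnessing world along the used derivation.

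\textbf{Step 2: a polarity/monotonicity lemma.} Prove, by induction on strata and, within a stratum, on fixpoint iterations, that the computed value of a fact $v$ is monotone nondecreasing in each input $u\in\mathsf{Dep}^+(v)\setminus\mathsf{Dep}^-(v)$. It is nonincreasing in each $u\in\mathsf{Dep}^-(v)\setminus\mathsf{Dep}^+(v)$, and it is independent of inputs outside $\mathsf{Dep}^+(v)\cup\mathsf{Dep}^-(v)$. The induction step is exactly the recursion (1) defining $\mathsf{CalDep}$: $D_v$ is a disjunction of conjunctions, with positive literals of bodies in $B^+$ and negated literals of bodies in $B^-$, and negation flips polarity just as $\overline{\odot}$ does.

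\textbf{Step 3: merge the two worlds.} Let $w_1\models v_1$ and $w_2\models v_2$. Define $w$ on each input $u$ as follows.
\begin{itemize}
\item If $u$ is relevant to only one of $v_1,v_2$, take that world's value.
\item If $u\in\mathsf{Dep}^+(v_1)\cap\mathsf{Dep}^+(v_2)$, set $w(u)=w_1(u)\lor w_2(u)$.
\item If $u\in\mathsf{Dep}^-(v_1)\cap\mathsf{Dep}^-(v_2)$, set $w(u)=w_1(u)\land w_2(u)$.
\end{itemize}
The hypothesis that $v_1,v_2$ are not statistically negatively dependent rules out the conflicting cases $u\in\mathsf{Dep}^+(v_1)\cap\mathsf{Dep}^-(v_2)$ and its symmetric counterpart. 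Then by Step 2, moving from $w_i$ to $w$ changes only inputs on which $v_i$ is independent, or moves them in $v_i$'s favorable direction. Hence $v_i$ remains true in $w$ for $i=1,2$, and $w$ satisfies $\Psi(\{v_1,v_2\})$.

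\textbf{Main obstacle.} The hard part is inputs lying in $\mathsf{Dep}^+(v_i)\cap\mathsf{Dep}^-(v_i)$ for the same $i$: $v_i$ is non-monotone in them, so Step 2 gives nothing. For these, the hypothesis says the input is not relevant to $v_{3-i}$ with the opposite polarity alone. If it is irrelevant to $v_{3-i}$, keep $w_i(u)$. If it is relevant to both, I would first try to choose $w$ to agree with $w_i$ and argue via Step 2 that $v_{3-i}$ is insensitive to it. Failing that, I would refine the argument to work along a single fixed derivation tree of $v_i$, tracking the literal polarity actually used in that tree rather than the coarse sets $\mathsf{Dep}^{\pm}$. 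This case analysis, together with carefully justifying the least-fixpoint extension on positive cycles, is where I expect the real work to be.
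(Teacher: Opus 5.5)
Your strategy is the paper's own argument made precise: merge a world witnessing $v_1$ with one witnessing $v_2$, taking $\lor$ on shared inputs of positive polarity and $\land$ on shared inputs of negative polarity, justified by polarity monotonicity. The paper only asserts that the two assignments ``can still be made consistent on the shared inputs.'' Your added hypothesis that each $v_i$ is individually satisfiable is genuinely needed.

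The proof breaks at Step~1. Fixing every non-input rule event to true does not ``only help,'' because a rule event feeding an atom used under negation works against the fact. Take an input $I$ with $a \colonminus I$, $c \colonminus I$, $v_1 \colonminus a, \neg c$. Writing $I$ for $\Gamma[I]$, $\Phi$ gives $\Gamma[v_1] \equiv S[r_1]\wedge S[r_a]\wedge I\wedge\neg S[r_c]$. This is satisfiable, but it becomes $I\wedge\neg I$ once all rule events are true. So the world $w_1$ that Step~3 starts from need not exist, even though $\{v_1,v_2\}$ is jointly satisfiable for every satisfiable $v_2$ that does not reach $I$. Recomputing facts by least fixpoints has a similar defect, since $\Phi$ is a completion. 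With $a\colonminus I$, $a\colonminus b$, $b\colonminus a$, $v_1\colonminus a,\neg I$, the fact $v_1$ is satisfiable under $\Phi$ only through the unfounded loop on $a,b$, so it has no least-fixpoint witness.

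If you keep rule events free, you must merge them too, and the hypothesis says nothing about them directly. $S[r]$ is shared exactly when the head of $r$ is reachable from both $v_1$ and $v_2$. The missing step is a lifting lemma. Read $\mathsf{Dep}^{+}(v)$ (resp.\ $\mathsf{Dep}^{-}(v)$) as the inputs reachable from $v$ along a path with an even (resp.\ odd) number of negated edges. Suppose a node $h$ reaches some input $u$ and is reached from $v_1$ and from $v_2$ with different parities. Then $u$ lies in $\mathsf{Dep}^+(v_1)\cap\mathsf{Dep}^-(v_2)$ or in $\mathsf{Dep}^+(v_2)\cap\mathsf{Dep}^-(v_1)$. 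Hence every shared node has a single common polarity, and its rule events can be merged exactly like inputs.

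This lemma fails for nodes that reach no input, and then the statement itself fails. Let $b$ have no derivation (so $\Gamma[b]=\bot$), and take $w\colonminus\neg b$, $v_1\colonminus w$, $v_2\colonminus\neg w$. All four $\mathsf{Dep}$ sets are empty and each $v_i$ is satisfiable, yet $\Gamma[v_1]\wedge\Gamma[v_2]$ forces $S[r_w]\wedge\neg S[r_w]$. So you need an additional assumption: either every derivable node reaches an input fact, or rule events are counted as inputs in $\mathsf{Dep}$. The paper's sentence ``if they do not share input facts, assignments can be combined'' has exactly this hole.

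Finally, what you call the main obstacle is not one. An input in $\mathsf{Dep}^+(v_i)\cap\mathsf{Dep}^-(v_i)$ is excluded from both $\mathsf{Dep}^-(v_{3-i})$ and $\mathsf{Dep}^+(v_{3-i})$ by the two disjointness conditions, so your first bullet already covers it. The same observation, applied to nodes instead of inputs, is what drives the lifting lemma.
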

\begin{proof}
Since $v_1$ and $v_2$ are not statistically negatively dependent, we have
$
(\mathsf{Dep}^+(v_1) \cap \mathsf{Dep}^-(v_2) = \emptyset)
\;\wedge\;
(\mathsf{Dep}^+(v_2) \cap \mathsf{Dep}^-(v_1) = \emptyset).
$
Thus, no shared input fact appears with opposite polarity in $v_1$ and $v_2$.
If they do not share input facts, assignments for $v_1$ and $v_2$ can be combined, so $v_1 \wedge v_2$ is \texttt{SAT}.
If they do share input facts, the condition above ensures that every shared fact has the same polarity in both relations, so their satisfying assignments can still be made consistent on the shared inputs.
Hence, there exists a joint assignment satisfying both $v_1$ and $v_2$, and therefore $v_1 \wedge v_2$ is \texttt{SAT}.
\end{proof}

Theorem~\ref{theorem:NegDep} motivates \colorbox{lightgreen}{\textsc{NegNode}}.
Given the current root $\mathsf{Root}$, the sampler prioritizes a new root $\mathsf{Root}'$ that is \emph{statistically negatively dependent} on $\mathsf{Root}$ (Algorithm~\ref{alg:neg-node}).
This connects derivation trees with conflicting sources, guiding the search toward combinations more likely to be \texttt{UNSAT} and form \texttt{MUSes}.

\begin{figure}[h]
\centering
\begin{minipage}[t]{0.32\linewidth}
\centering
\includegraphics[width=\linewidth]{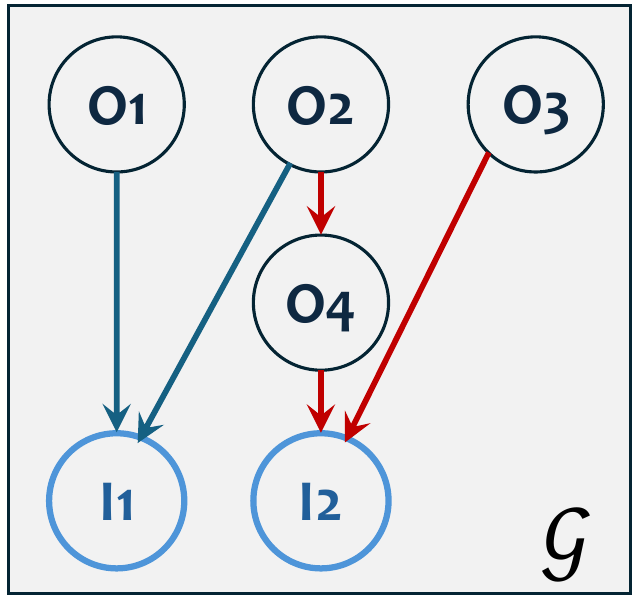}
\end{minipage}\hfill
\begin{minipage}[t]{0.32\linewidth}
\centering
\includegraphics[width=\linewidth]{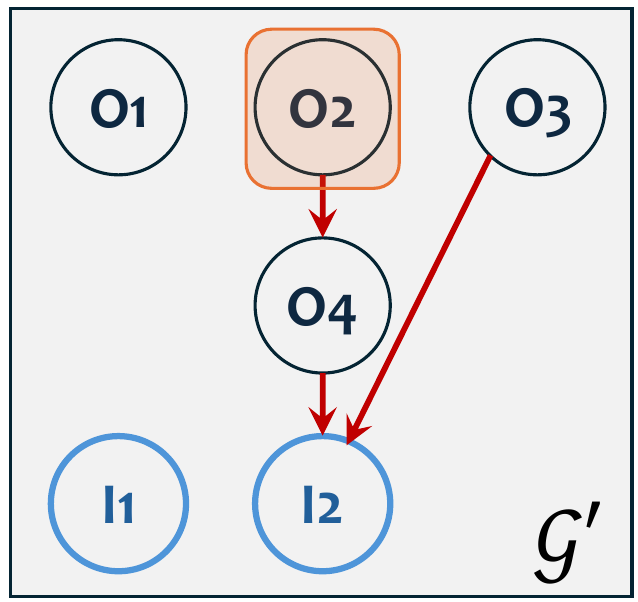}
\end{minipage}\hfill
\begin{minipage}[t]{0.32\linewidth}
\centering
\includegraphics[width=\linewidth]{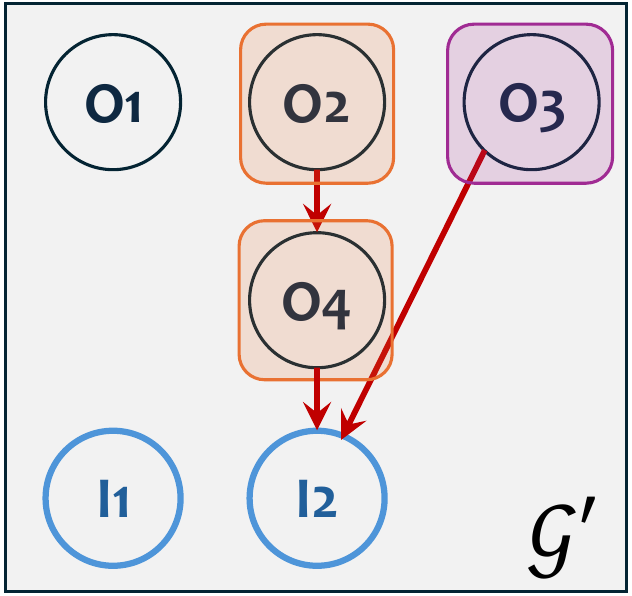}
\end{minipage}
\vspace{-0.1in}
\caption{Sampling example showing the application of \textsc{NegNode}, which selects \textcolor{mypurple}{$\mathsf{O3}$} as the next root from the current root \textcolor{orange}{$\mathsf{O}2$}. The corresponding procedure is shown in Algorithm~\ref{alg:multi-root-bfs}.}
\label{fig:sampling_example}
\vspace{-1em}
\end{figure}

\begin{example}
\label{example:negDep}
Consider the following grounded derivations:
\[
\mathsf{O}1 \colonminus \mathsf{I}1.\quad
\mathsf{O}2 \colonminus \mathsf{I}1, \neg \mathsf{O}4.\quad
\mathsf{O}4 \colonminus \neg \mathsf{I}2.\quad
\mathsf{O}3 \colonminus \neg \mathsf{I}2.
\]
The corresponding derivation graph $\mathcal{G}$ is shown in Figure~\ref{fig:sampling_example}. From the graph structure, $\mathsf{O}2$ inherits the negative dependency of $\mathsf{O}4$. Concretely,
\[
\begin{aligned}
\mathsf{Dep}^+(\mathsf{O}2) &=  \mathsf{Dep}^+(\mathsf{I}1) \cup \underline{\mathsf{Dep}^-(\mathsf{O}4)} \\
&= \mathsf{Dep}^+(\mathsf{I}1) \cup \underline{\mathsf{Dep}^+(\mathsf{I}2)} \\
&= \{\mathsf{I}1,\mathsf{I}2\},
\end{aligned}
\]
and therefore $\mathsf{Dep}^+(\mathsf{O}2) = \{\mathsf{I}1,\mathsf{I}2\}$. Similarly, $\mathsf{Dep}^-(\mathsf{O}3) = \{\mathsf{I}2\}$. Since
\begin{equation}
\label{eq:depO2O3}  
\mathsf{Dep}^+(\mathsf{O}2) \cap \mathsf{Dep}^-(\mathsf{O}3) = \{\mathsf{I}2\},
\end{equation}
$\mathsf{O}2$ and $\mathsf{O}3$ are \emph{statistically negatively dependent}, and sampling them together is likely to expose a conflict.
\end{example}

We illustrate \textsc{DerivativeSampling} (Algorithm~\ref{alg:multi-root-bfs}) on Example~\ref{example:negDep} in Figure~\ref{fig:sampling_example}, using quota $q=3$.

\begin{enumerate}[leftmargin=*]
    \item \textbf{Initial selection (Line~\ref{line:randSelect}).}  
    The sampler begins by randomly selecting a queried relation, for example $\mathsf{O}2$. The visited set is initialized as $\mathsf{Visited}=\{\mathsf{O}2\}$.

    \item \textbf{Negation traversal (Line~\ref{line:traversal}).}  
    The algorithm performs a BFS over negation edges to identify structurally induced conflicts. As indicated by the \textbf{red arrow} in the figure, $\mathsf{O}2$ depends on $\neg\mathsf{O}4$. Consequently, $\mathsf{O}4$ is added to the frontier and included in the visited set, yielding $\mathsf{Visited}=\{\mathsf{O}2, \mathsf{O}4\}$. These relations correspond to the nodes highlighted in the \textbf{orange box}.

    \item \textbf{Conflict-based jump.}
    The BFS frontier becomes empty because $\mathsf{O}4$ has no further negative dependencies, even though the quota is still unmet ($|\mathsf{Visited}|=2 < 3$).
    The algorithm therefore jumps to a new relation outside the current derivation subtree, i.e., beyond the \textbf{orange box}.

    \item \textbf{Extension via negative dependency (Line~\ref{line:neg_node}).}
    The algorithm invokes \textsc{NegNode} to find a relation that is \emph{statistically negatively dependent} on the current visited set.
    Comparing $\mathsf{O}3$ (in the \textbf{purple box}) with $\mathsf{O}2$, we observe a shared dependency on input fact $\mathsf{I}2$ (Eq.~\ref{eq:depO2O3}), indicating a potential conflict.
    Therefore, $\mathsf{O}3$ is added to $\mathsf{Visited}$, yielding $\{\mathsf{O}2, \mathsf{O}4, \mathsf{O}3\}$.

    \item \textbf{Finalization.}
Once the quota is reached, the sampler assigns positive polarities to relations in $\mathsf{Visited}$.
These are passed to the SAT solver as soft constraints to guide generation of an unexplored candidate for satisfiability checking.
\end{enumerate}

\subsection{Dynamic Sampling Quota Control}

This subsection explains the dynamic quota mechanism used in Algorithm~\ref{alg:mus-enumeration} in conjunction with \textsc{DerivativeSampling} (Algorithm~\ref{alg:multi-root-bfs}). Figure~\ref{fig:quota-change} illustrates a representative execution trace, showing how the sampling quota (Y-axis) evolves across successive iterations (X-axis) to balance exploration efficiency against formula complexity.

In the initial iterations (1--3), the algorithm typically starts with a relatively large quota, which is then reduced exponentially. As long as the sampled combinations remain \texttt{UNSAT} (solid markers), the algorithm invokes \textsc{ReduceOrMaintainQuota} (Line~\ref{line:reduce-or-maintain-quota}) to aggressively shrink the sample size. This strategy aims to identify the smallest combination that still preserves unsatisfiability, thereby reducing the cost of subsequent \texttt{SAT} checks.

\begin{figure}[h]
    \centering
    \vspace{-1em}
    \scalebox{0.6}{
    \begin{tikzpicture}[x=1.2cm, y=0.5cm, myblue, thick]
        \draw[->, black] (0.5,0) -- (10,0);
        \draw[->, black] (1,-0.5) -- (1,10);
        
        \node[left, black] at (1, 8) {16};
        \node[left, black] at (1, 4) {8};
        \node[left, black] at (1, 2) {4};
        \node[left, black] at (1, 1) {2};

        \draw[dashed, gray!40, thin] (1,4) -- (8,4); 
        \draw[dashed, gray!40, thin] (1,2) -- (7,2); 
        \draw[dashed, gray!40, thin] (1,1) -- (4,1); 

        \draw[dashed, gray!40, thin] (2,0) -- (2,4);
        \draw[dashed, gray!40, thin] (3,0) -- (3,2);
        \draw[dashed, gray!40, thin] (4,0) -- (4,1);
        \draw[dashed, gray!40, thin] (5,0) -- (5,2);
        \draw[dashed, gray!40, thin] (6,0) -- (6,2);
        \draw[dashed, gray!40, thin] (7,0) -- (7,2);
        \draw[dashed, gray!40, thin] (8,0) -- (8,4);

        \foreach \x in {1, 2, 3, 4, 5, 6, 7, 8}
            \node[below, black] at (\x, -0.3) {\x};

        \begin{scope}[shift={(2.5, 7.2)}]
            \draw[black, fill=white] (0,0) rectangle (2.2, 2.3);
            \filldraw[fill=myblue] (0.4, 1.6) circle (3pt) node[right, black, xshift=3pt] {\small UNSAT};
            \filldraw[fill=white] (0.4, 0.7) circle (3pt) node[right, black, xshift=3pt] {\small SAT};
        \end{scope}

        \draw[myblue, very thick] (1,8) -- (2,4) -- (3,2) -- (4,1) -- (5,2) -- (6,2) -- (7,2) -- (8,4) -- (10,4);

        \foreach \pos in {(1,8), (2,4), (3,2), (5,2), (6,2), (8,4)} {
            \filldraw[fill=myblue] \pos circle (3pt);
            \node[above right, black, font=\small, xshift=-2pt] at \pos {L\ref{line:reduce-or-maintain-quota}};
        }

        \foreach \pos in {(4,1), (7,2)} {
            \filldraw[fill=white] \pos circle (3pt);
            \node[below right, black, font=\small, xshift=-2pt, yshift=-1pt] at \pos {L\ref{line:increase-quota}};
        }

    \end{tikzpicture}
    }
      \vspace{-1em}
    \caption{Evolution of the sampling quota ($y$-axis) across successive iterations ($x$-axis). Solid markers denote UNSAT combinations verified by the SAT solver, whereas hollow markers represent SAT combinations.}
     \vspace{-1em}
    \label{fig:quota-change}
\end{figure}

A key transition occurs at iteration~4, where the solver returns a \texttt{SAT} result (hollow marker). This outcome indicates that the current quota (size~2) is too small to expose a conflict. In response, the algorithm triggers \textsc{IncreaseQuota} (Line~\ref{line:increase-quota}) to incrementally enlarge the sampling budget. As a result, the sampler stabilizes around a low but sufficient quota, as observed in iterations~5--7, where the quota remains at size~4 while the solver continues to find \texttt{UNSAT} combinations. By maintaining small candidate sizes whenever possible, the algorithm keeps the induced Boolean formulas compact, which significantly improves \texttt{SAT}-solving performance.

\section{Bottom-Up \texttt{MUS} Inference: Search Pruning}
\label{sec:unsatInference}
To reduce \texttt{SAT} solver calls, we leverage the derivation graph to infer additional \texttt{MUSes} from known conflicts (i.e., previously detected \texttt{MUSes}). Algorithm~\ref{alg:infermus} propagates these conflicts bottom-up through the recursive derivation structure by replacing facts in an existing \MUS with their ancestors, a process we call \emph{logical replacement}.


\begin{algorithm}[htbp]
\caption{$\mathsf{InferMUS}$}
\label{alg:infermus}
{\footnotesize
\KwIn{A known MUS $M$, unexplored combinations $\mathcal{U}$, MUS set $\mathcal{S}$}
\KwOut{Updated MUS set $\mathcal{S}$}

\ForEach{$v \in M$}{
  $\mathcal{N} \gets$ relations exclusively dependent on $v$\;
  \ForEach{$n \in \mathcal{N}$}{
    \colorbox{lightestmain}{$C \gets (M \setminus \{v\}) \cup \{n\}$}\; \label{line-case1}
    $\mathcal{S} \gets \mathsf{MineMUS}(C,\mathcal{U},\mathcal{S})$\;
  }
  $\mathcal{N}' \gets$ relations with two derivations $\{d_1,d_2\}$ where $v \in d_1$\;
  \ForEach{$n' \in \mathcal{N}'$}{
    $\{d_1,d_2\} \gets \mathsf{Derivations}(n')$\;
    \ForEach{literal $l$ in $d_2$}{
      $v' \gets \mathsf{Atom}(l)$\;
      \If{$\mathsf{IsNegative}(l)$}{
        \colorbox{lightestmain}{$C \gets (M \setminus \{v\}) \cup \{n', v'\}$}\; \label{line-case2}
        $\mathcal{S} \gets \mathsf{MineMUS}(C,\mathcal{U},\mathcal{S})$\;
      }
      \Else{
        $\mathcal{S}_{\mathsf{conflict}} \gets \{M_{alt} \in \mathcal{S} \mid v' \in M_{alt}\}$\;
        \ForEach{$M_{alt} \in \mathcal{S}_{\mathsf{conflict}}$}{
          \colorbox{lightestmain}{$C \gets (M \setminus \{v\}) \cup (M_{alt} \setminus \{v'\}) \cup \{n'\}$}\; \label{line-case3}
          $\mathcal{S} \gets \mathsf{MineMUS}(C,\mathcal{U},\mathcal{S})$\;
        }
      }
    }
  }
}
\KwRet{$\mathcal{S}$}\;
}
\end{algorithm}


\begin{algorithm}[htbp]
\caption{$\mathsf{MineMUS}$}
\label{alg:minemus}
{\footnotesize
\KwIn{A candidate $C$, unexplored combinations $\mathcal{U}$, \MUS set $\mathcal{S}$}
\KwOut{Updated MUS set $\mathcal{S}$}

\If{$\mathsf{IsUnexplored}(C)$}{
  $M \gets \textsc{Shrink}(C)$\;
  $\mathcal{U} \gets \mathcal{U} \setminus (\mathsf{Supersets}(M) \cup \mathsf{Subsets}(M))$\;
  $\mathcal{S} \gets \mathcal{S} \cup \{M\} \cup \mathsf{InferMUS}(M,\mathcal{U},\mathcal{S})$\;
}
\KwRet{$\mathcal{S}$}\;
}
\end{algorithm}






\textbf{\emph{Logical Replacement.}}
%
%
The complexity of logical replacement depends on the number of derivation paths available to that ancestor.
An \textbf{Exclusive Dependency} occurs when a relation $n$ depends solely on a relation $v$ already known to be part of an \MUS $M$ (lines 3--6).
In this scenario, $n$ acts as a direct proxy for $v$, and the set $(M \setminus \{v\}) \cup \{n\}$ is guaranteed to be \UNSAT (line~\ref{line-case1}).

Conversely, if a relation $n'$ is \textbf{derived by two derivations} $\{d_1, d_2\}$ where $v \in d_1$, replacing $v$ with $n'$ is insufficient to maintain unsatisfiability because $n'$ could still be satisfied via the alternative path $d_2$.
To maintain the conflict, the algorithm blocks the alternative derivation $d_2$ by targeting its literals at line 10. 
If $d_2$ depends on a negative literal $\neg v'$, the algorithm unions the current set with the atom $v'$ at line~\ref{line-case2}. 
This forces $v'$ to be true, which directly invalidates the alternative path $d_2$. 
%
If $d_2$ depends on a Positive Literal $v'$, the algorithm locates an external blocking set by searching the existing \MUS pool $\mathcal{S}$ for a conflict $M_{alt}$ that contains $v'$ (line 16).
By substituting $v'$ with the remaining members of $M_{alt}$, the algorithm ensures that $v'$ cannot hold, thereby successfully blocking the alternative path $d_2$ at line~\ref{line-case3} to guarantee a new \UNSAT result.

For example, consider MUSes $M = \{O1, O2\}$ and $M_{alt} = \{O3, O4\}$ with rules $O5 \colonminus O1$ and $O5 \colonminus O3$.
%
Replacing $O1$ with $O5$ in $M$ yields the satisfiable set $\{O5, O2\}$ because $O5$ remains true via $O3$ . 
By incorporating $M_{alt}$, the algorithm refines this to the \UNSAT set $\{O5, O4, O2\}$ at line~\ref{line-case3}, where $O4$ acts as a blocking agent for the $O3$ path.
%
%

\textbf{\emph{The Mining Procedure.}}
This replacement logic is orchestrated by the \texttt{MineMUS} (Algorithm~\ref{alg:minemus}), which serves as the primary engine for conflict discovery.
When provided with an \UNSAT candidate $C$, it first applies \textsc{Shrink}\cite{Liffiton2016FastFlexibleMUS}--a standard deletion-based minimization procedure--to yield a minimal \MUS $M$ satisfying  Definition~\ref{def:mus-bool}.
%
%
%
%
After identifying a new \MUS $M$, the algorithm prunes the search space $\mathcal{U}$ by removing all supersets of $M$ (which are non-minimal) and all subsets of $M$ (which are guaranteed to be \texttt{SAT}) at line 4.
Finally, it invokes \texttt{InferMUS} to expand the conflict through the derivation graph, generating new candidates for mining without additional solver overhead.

\section{Evaluation}
\label{sec:eval}

We implemented our approach in a prototype tool called \NAME, developed in C++. The tool extends the Soufflé Datalog engine~\cite{jordan2016souffle} by instrumenting its resolution procedure to extract a complete derivation graph for the input probabilistic Datalog program. 
For satisfiability checking, \NAME integrates with the CVC5 SMT solver~\cite{barbosa2022cvc5}. 

Our experimental evaluation was conducted on a Linux-based system powered by an AMD EPYC 7R13 processor (32 physical cores, 64 logical threads) and 242 GB of RAM, operating within a KVM-based virtualization environment. 
We assigned a 30-minute time limit (\texttt{T/O}) to each benchmark execution. The evaluation is designed to address the following research questions:

\begin{itemize}[leftmargin=*]
    \item \textbf{RQ1:} How efficient and scalable is our method in generating \texttt{MUSes} compared to state-of-the-art baseline approaches?
    \item \textbf{RQ2:} What is the quality of the \texttt{MUSes} generated by our method, in terms of both the total number of identified conflicts and the resulting coverage of Datalog relations?
    \item \textbf{RQ3:} How helpful are the generated \texttt{MUSes} compared to the baselines in assisting real-world Datalog-based program analysis tasks?
\end{itemize}




\subsection{Applications, Benchmarks, and Baselines}

We evaluate \NAME on \textbf{70} benchmarks drawn from \textbf{four} distinct domains, as summarized in Table~\ref{tab:benchmark-stat}.

\begin{table}[h]
\centering
\caption{Benchmark Statistics}
\label{tab:benchmark-stat}
\vspace{-1em}
\small
\setlength{\tabcolsep}{4pt}
\scalebox{0.99}{
\begin{tabular}{l|cc|cc|cc}
& \multicolumn{2}{c|}{\textbf{Average}} & \multicolumn{2}{c|}{\textbf{Max}} & \multicolumn{2}{c}{\textbf{Median}} \\
& \#nodes & \#edges & \#nodes & \#edges & \#nodes & \#edges \\
\hline
\textbf{SC}~\cite{wang2025OOPSLA, zhang2018scinfer}     & 29,798 & 71,267 & 166,530 & 389,545 & 4,294 & 14,210 \\
\textbf{DR}~\cite{Raghothaman2018,li2025combining}        & 59,829 & 254,279 & 119,398 & 722,116 & 59,874 & 210,046 \\
\textbf{SD}~\cite{Sung2018} & 8,251 & 638,252 & 50,849 & 5,884,061 & 2,819 & 56,339 \\
\textbf{Bayes}~\cite{bench-munin}            & 4,309 & 55,343 & 4,461 & 66,188 & 4,400 & 55,823 \\
\end{tabular}
}
\vspace{-0.1in}
\end{table}

\begin{figure*}[t]
  \centering
  \begin{subfigure}[t]{0.24\textwidth}
    \centering
    \includegraphics[width=\linewidth]{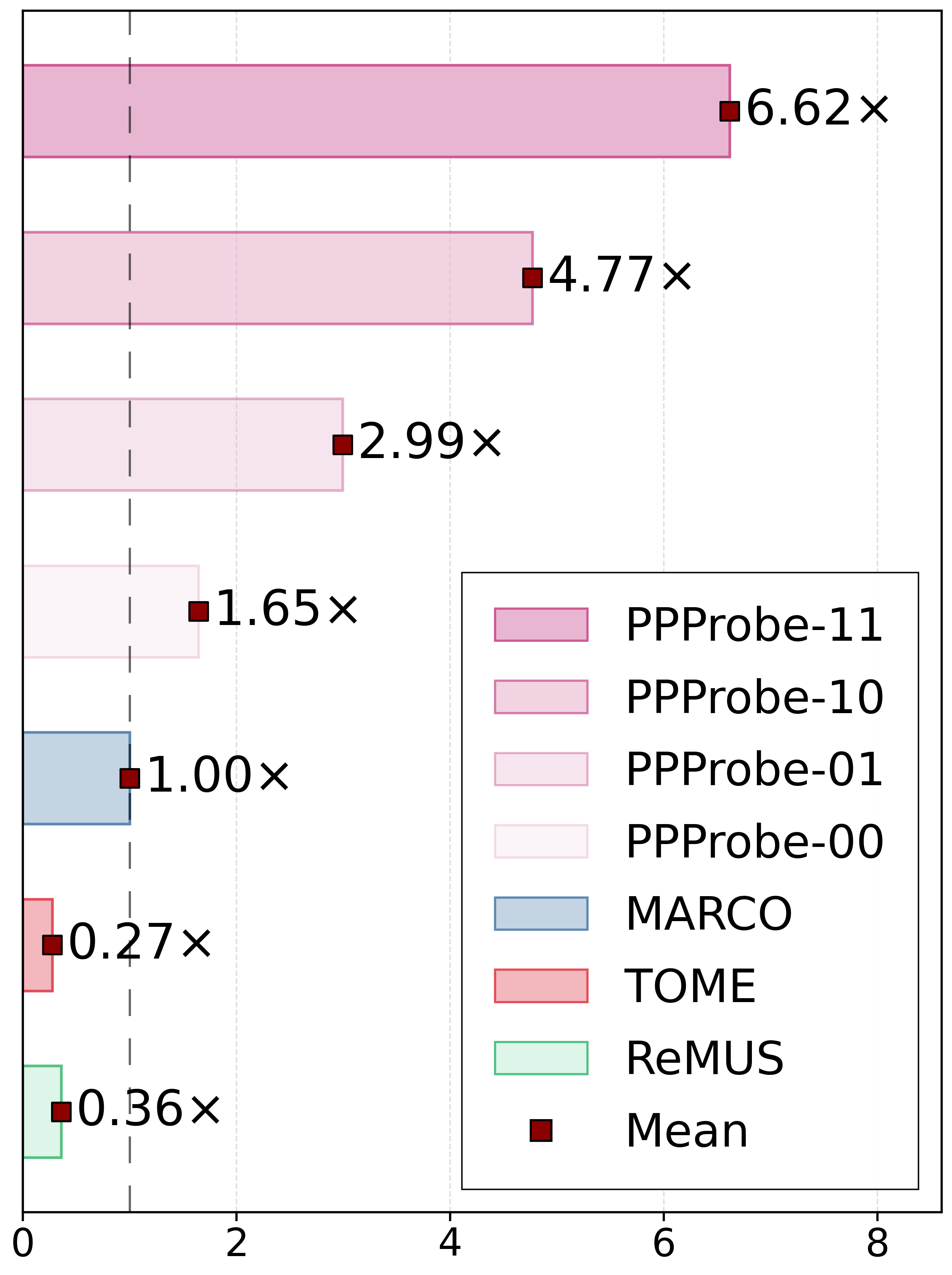}
    \caption{Side-channel Analysis (SC)}
    \label{fig:figure6-side-chans}
  \end{subfigure}\hfill
  \begin{subfigure}[t]{0.24\textwidth}
    \centering
    \includegraphics[width=\linewidth]{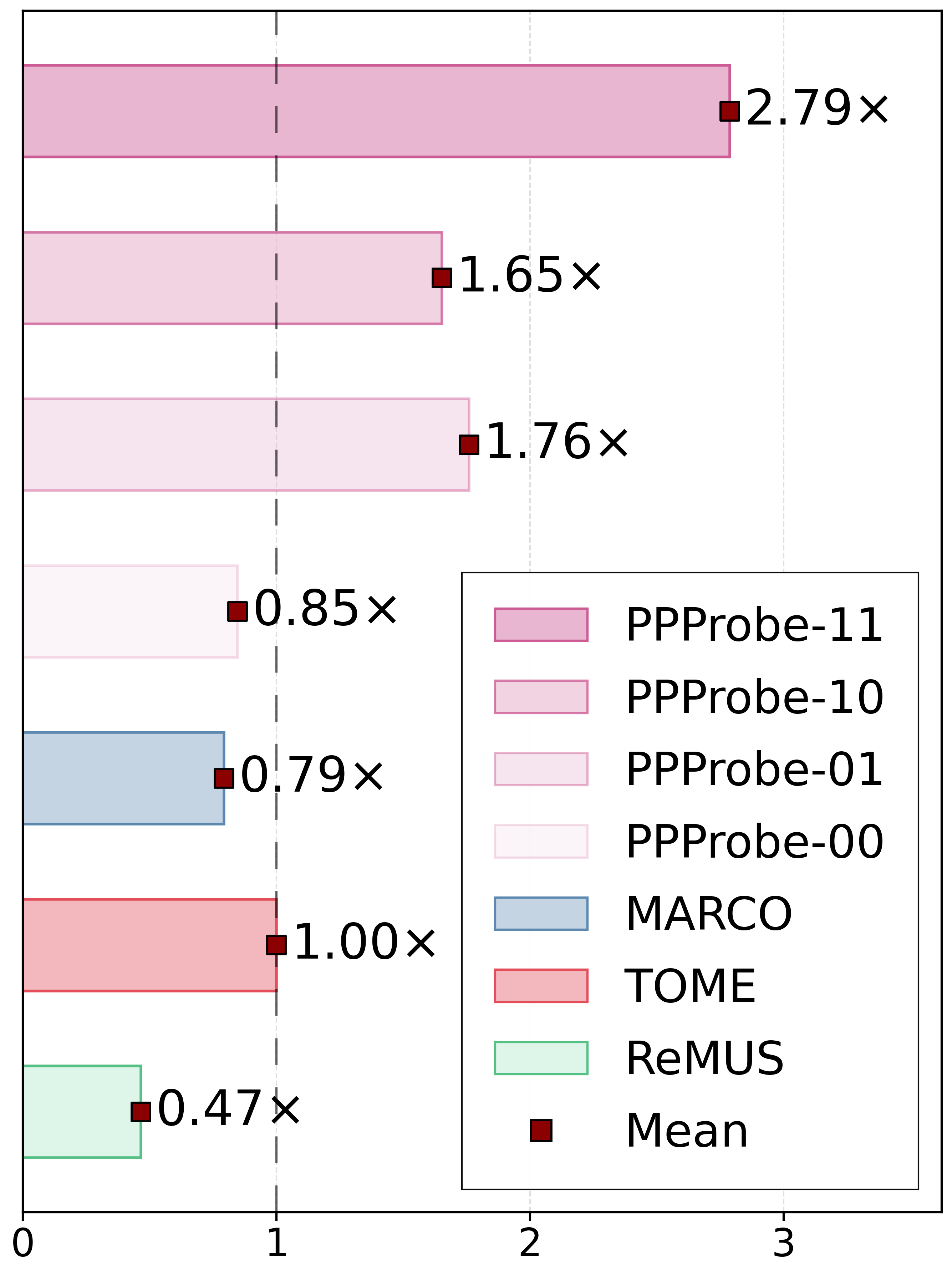}
    \caption{Semantic Diffing (SD)}
    \label{fig:figure6-semantic-diffing}
  \end{subfigure}\hfill
  \begin{subfigure}[t]{0.24\textwidth}
    \centering
    \includegraphics[width=\linewidth]{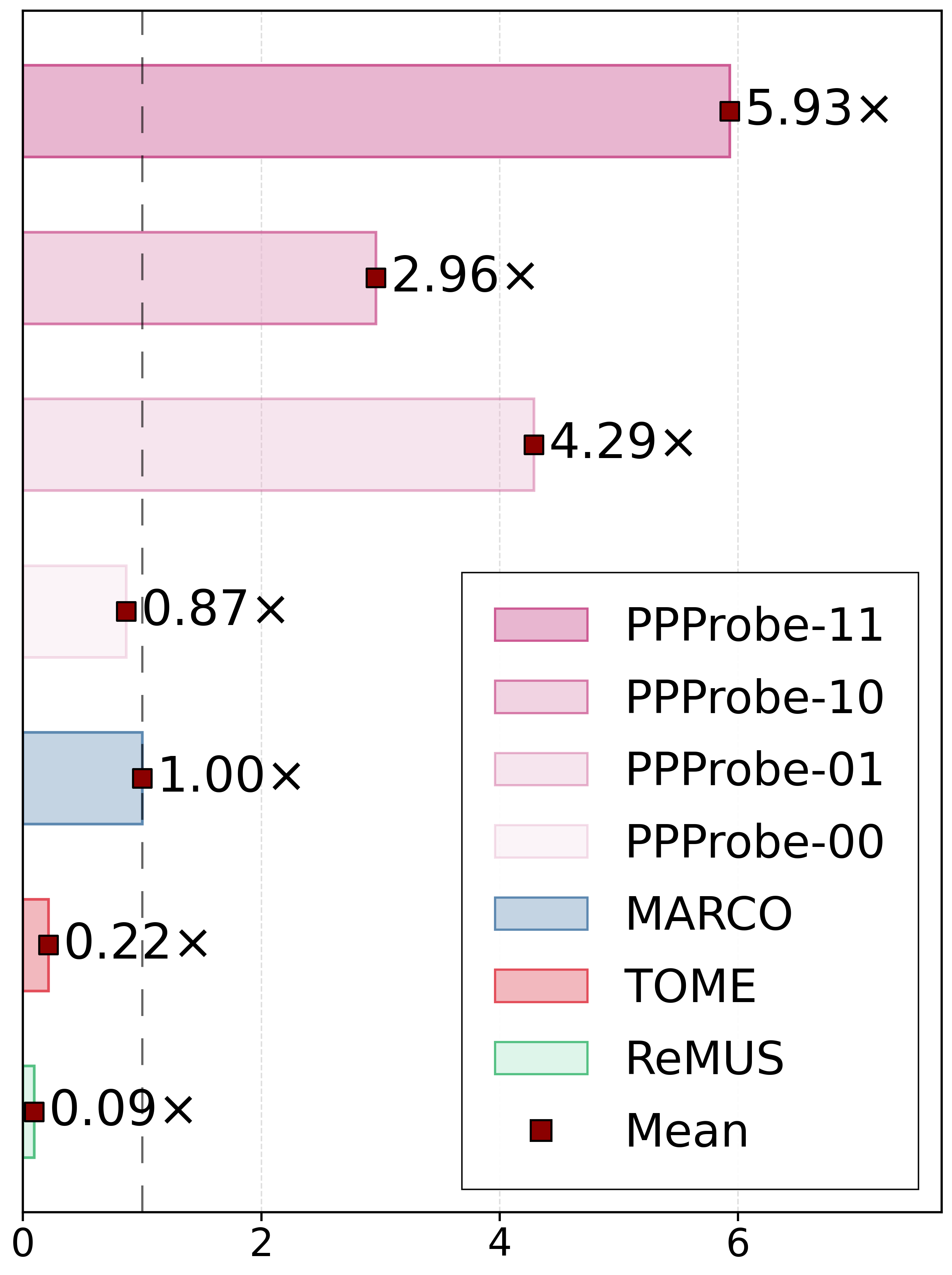}
    \caption{Data Race Analysis (DR)}
    \label{fig:figure6-datarace}
  \end{subfigure}\hfill
  \begin{subfigure}[t]{0.24\textwidth}
    \centering
    \includegraphics[width=\linewidth]{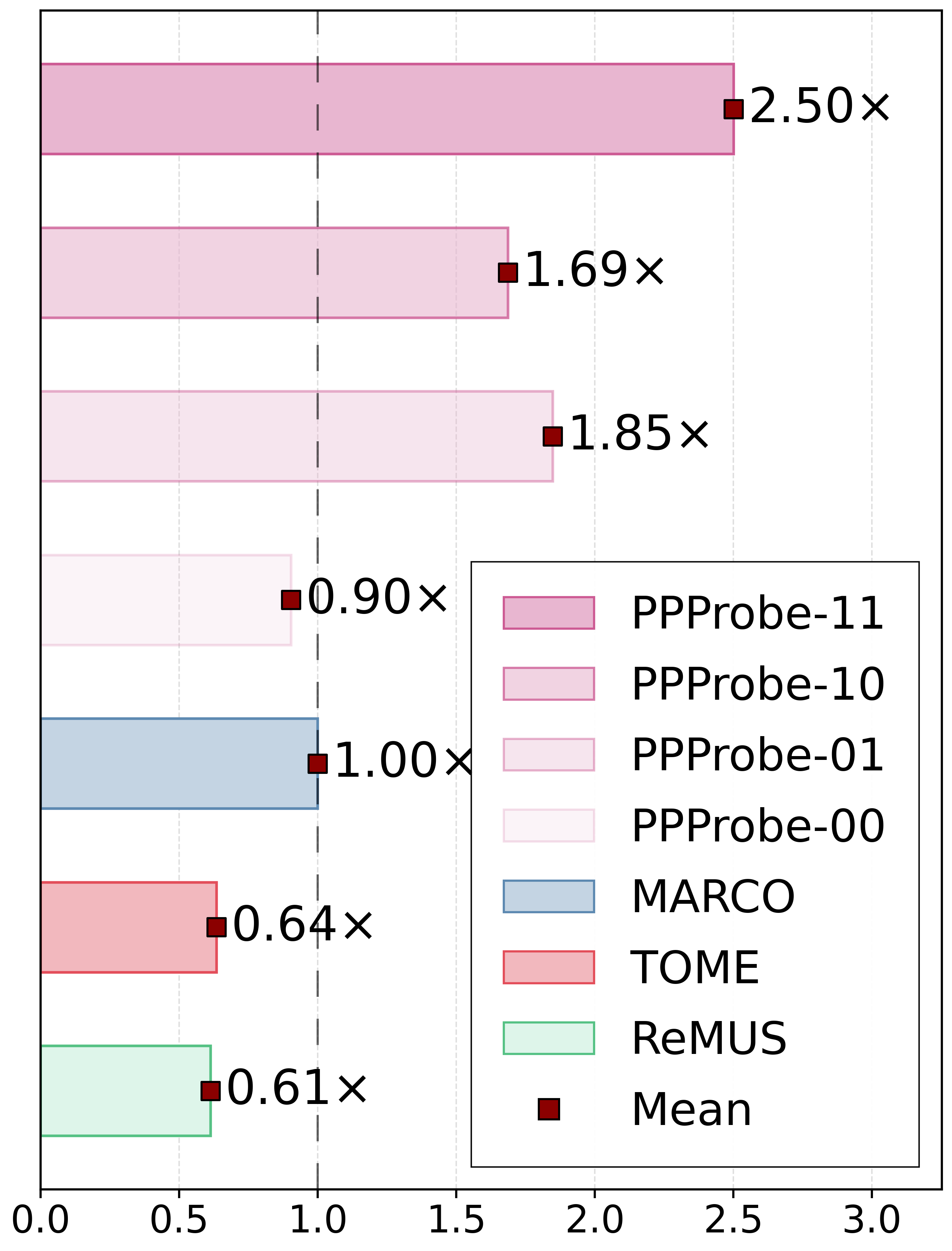}
    \caption{Bayes}
    \label{fig:figure6-bnr}
  \end{subfigure}
  \vspace{-0.1in}
  \caption{Normalized \MUS count of \NAME and its ablated variants relative to the strongest baseline. 
  }
  \label{fig:figure-four-domains}
\end{figure*}

\textbf{Side-Channel Analysis.} The first domain (labeled \textbf{SC} in Table~\ref{tab:benchmark-stat}) consists of \textbf{18} Datalog-based analyses for detecting power side-channel leaks~\cite{wang2025OOPSLA,wang2021data,zhang2018scinfer}. We include this domain because side-channel risk assessment requires estimating the likelihood of information leakage rather than checking a purely Boolean property. Recent work~\cite{wang2025OOPSLA} has further introduced a probabilistic Datalog formulation for modeling such likelihoods. In this evaluation, the analyses are applied to implementations of widely used cryptographic primitives, including AES, SHA-3, and MAC-Keccak, with queries of the form \texttt{key\_sensitive} to identify potential secret leaks for further investigation.

\textbf{Data Race Analysis.} The second domain (labeled \textbf{DR} in Table~\ref{tab:benchmark-stat}) consists of \textbf{7} data-race analyses on real-world concurrent applications such as the Apache FTP server, the ETH web crawler and AVR microcontroller simulator.
Prior works~\cite{Raghothaman2018,Zhang2017} developed probabilistic Datalog rules to detect data races in those applications that combine thread-escape, may-happen-in-parallel, and lockset analyses in a flow- and context-sensitive manner.

In this setting, the input facts are deterministic, while the Datalog rules are probabilistic, capturing the fraction of cases in which a rule may produce an incorrect conclusion from true hypotheses. The queries are alarms of the form \texttt{race(p1,p2)}, indicating that program points \texttt{p1} and \texttt{p2} may participate in a data race and require further inspection.

\textbf{Semantic Diffing.} The third domain (labeled \textbf{SD} in Table~\ref{tab:benchmark-stat}) consists of \textbf{41} semantic-diffing analyses for identifying synchronization differences between evolving multithreaded programs including the Apache HTTP server, the FreeBSD audit subsystem, and the Linux IIO driver. 
Each application compares two program versions and looks for synchronization, ordering, or data-flow behavior present in one version but not the other.
Following prior work~\cite{Sung2018}, these analyses are formulated as Datalog programs that compute differentiating data-flow edges, namely edges permitted by one program but not the other. In this setting, the input facts are deterministic, while the Datalog rules are probabilistic, capturing the uncertainty introduced by approximate inference. The queries correspond to synchronization differences that may indicate unintended behavioral changes and require further inspection.


\textbf{Bayesian Networks (Bayes).} The fourth domain includes \textbf{4} large-scale Bayesian networks from the \texttt{bnlearn} repository~\cite{scutari2010learning}. 
We focus on \texttt{Massive Networks} (over 1,000 nodes) to test scalability. 
We convert these networks into probabilistic Datalog by designating source nodes (those without incoming edges) as input facts and all other nodes as output facts. 
%

\textbf{Benchmark Statistics.} Table~\ref{tab:benchmark-stat} summarizes key statistics for the four benchmark categories. Here, \#nodes and \#edges denote the numbers of nodes and hyperedges in the corresponding derivation graphs. For each category, we report the average, maximum, and median values across all benchmarks. 
These statistics indicate the problem scale indirectly. 
In our encoding, nodes and hyperedges induce fact and rule variables respectively. 
And probabilistic inference reduces to weighted model counting (WMC), which is \#P-hard and worst-case exponential in the number of Boolean random variables determined by \#nodes and \#edges.

%

\textbf{Baselines and Experimental Setup.} We compare \NAME against \MARCO~\cite{Liffiton2016FastFlexibleMUS}, \REMUS~\cite{Bendik2018ReMUS}, and \TOME~\cite{Bendik2016TOME}, which are state-of-the-art domain-agnostic \MUS enumeration methods. These baselines are designed for propositional formulas in conjunctive normal form and do not natively support the hierarchical and recursive dependencies in Datalog. For a fair comparison, we integrated them into the \texttt{MUST} framework~\cite{Bendik2020MUST} and extended them with our sound encoding (Sec.~\ref{sec:conflict-encoding}). As a result, all baselines share the same grounding and formula construction phases and can process nested conjunctions and disjunctions in probabilistic Datalog.

\subsection{Efficiency and Scalability (RQ1)}


To evaluate the efficiency of \NAME, we measured its runtime on all benchmarks under the timeout limit and compared it against three state-of-the-art baselines: \MARCO~\cite{Liffiton2016FastFlexibleMUS}, \REMUS~\cite{Bendik2018ReMUS}, and \TOME~\cite{Bendik2016TOME}. We also include three ablation variants of \NAME to isolate the contributions of its core components:
\begin{itemize}[leftmargin=*, labelsep=0.2em]
\item \textbf{\NAME-00}: \NAME without either optimization.
\item \textbf{\NAME-10}: \NAME with derivative sampling only (Sec.~\ref{sec:derv_sample}).
\item \textbf{\NAME-01}: \NAME with bottom-up inference only (Sec.~\ref{sec:unsatInference}).
\item \textbf{\NAME-11}: Full \NAME with both optimizations.
\end{itemize}


\begin{table}[h]
\centering
\footnotesize
\setlength{\tabcolsep}{4pt}
\vspace{-0.1in}
\caption{Runtime Statistics for Completed Benchmarks 
}
\scalebox{1}{ 
\begin{tabular}{l rrrr rrrr rrrr}
\toprule
 &  \multicolumn{4}{c}{\textbf{Average }} &  \multicolumn{4}{c}{\textbf{Median }} &  \multicolumn{4}{c}{\textbf{Maximum }} \\
    \cmidrule(lr){2-5}  \cmidrule(lr){6-9} \cmidrule(lr){10-13}
 & \textbf{P}  & \textbf{M} & \textbf{R} & \textbf{T}  & \textbf{P} & \textbf{M} & \textbf{R} & \textbf{T}  & \textbf{P} & \textbf{M} & \textbf{R} & \textbf{T} \\
\midrule
\textbf{SC}  & \textbf{0.4} & 0.5 & 0.4 & 2.4 & \textbf{0.0} & 0.0 & 0.0 & 0.0 & \textbf{1.9} & 3.0 & 2.4 & 17.9 \\
\bottomrule
\end{tabular}
}
\label{tab:time-finished}
\vspace{-0.1in}
\end{table}

\textbf{\emph{Terminated Instances.}} Eight \textbf{side-channel analysis} benchmarks completed within the time limit. For these instances, Table~\ref{tab:time-finished} reports the runtime (in seconds) of the full \NAME method (\NAME-11) and three baseline MUS enumerators: \MARCO, \REMUS, and \TOME, abbreviated as \textbf{M}, \textbf{R}, and \textbf{T}, respectively. The table summarizes the \textbf{average}, \textbf{median}, and \textbf{maximum} runtime across all eight benchmarks.
We report only the full \NAME-11 configuration, since the four ablation variants exhibited nearly identical performance on these terminated instances.
The median runtimes of \NAME-11 and all three baselines (\textbf{M}, \textbf{R}, and \textbf{T}) round to zero because the values are very small.
Overall, \NAME-11 remains competitive with the baseline \MUS enumerators: it achieves the best average and maximum runtime among all methods. 
%


\textbf{\emph{Timed-Out Instances.}}
For the remaining benchmarks that did not terminate within the \texttt{T/O} limit, we evaluate efficiency by counting the number of \texttt{MUSes} discovered within the 30-minute budget.
Figure~\ref{fig:figure-four-domains} reports, for each \NAME variant and each baseline (\MARCO, \TOME, and \REMUS), the normalized throughput relative to the \emph{best} baseline in the corresponding application domain; that is, we divide the number of \texttt{MUSes} found by a given method by the number found by the strongest baseline. 
%
Because the strongest baseline differs across domains, the normalization is domain-specific.
As shown in Figure~\ref{fig:figure-four-domains}, \TOME is the best baseline for \textbf{Semantic Diffing}, whereas \MARCO is the best baseline for the other three domains.

More specifically, Figure~\ref{fig:figure-four-domains}(a) reports the average normalized throughput over the 18 \textbf{side-channel analysis} benchmarks.
On average, \NAME-11 discovers $6.62\times$ more \texttt{MUSes} than \MARCO, the strongest baseline in this domain, within the same \texttt{T/O} budget.
Its gains over \TOME and \REMUS are even larger: \NAME-11 finds about $24\times$ more \texttt{MUSes} than \TOME and about $18\times$ more than \REMUS.
A similar trend also appears in the other domains shown in Figure~\ref{fig:figure-four-domains}(b)--(d). 

\textbf{\emph{Ablation.}} Regarding the ablation variants, both \NAME-10 and \NAME-01 substantially outperform the unoptimized \NAME-00, indicating that each optimization contributes meaningfully to performance on its own.
Furthermore, their combination in \NAME-11 leads to a much larger gain.
%

\textbf{Summary.} Overall, these results show that \NAME-11 consistently achieves the highest throughput, especially on large search spaces, confirming that derivative sampling and \MUS inference are complementary and mutually reinforcing.

\begin{table*}[h]
    \centering
    \small
    \caption{Aggregated reduction statistics (in \%).}
    \label{tab:reduction-statistics}
    \resizebox{0.99\columnwidth}{!}{
    \begin{tabular}{lc rrrr rrrr rrrr}
      \toprule 
      & \multirow{2}{*}{\textbf{Query}}
      & \multicolumn{4}{c}{\textbf{Average $\Delta$\%}}
      & \multicolumn{4}{c}{\textbf{Median $\Delta$\%}}
      & \multicolumn{4}{c}{\textbf{Maximum $\Delta$\%}} \\
      \cmidrule(lr){3-6}
      \cmidrule(lr){7-10}
      \cmidrule(lr){11-14}
      & & \textbf{P} & \textbf{M} & \textbf{R} & \textbf{T}
      & \textbf{P} & \textbf{M} & \textbf{R} & \textbf{T}
      & \textbf{P} & \textbf{M} & \textbf{R} & \textbf{T} \\
      \midrule
      \textbf{SC} & \texttt{key\_sensitive(var)}
      & \textbf{69.17} & 54.94 & 38.53 & 51.02
      & \textbf{87.96} & 65.11 & 18.27 & 47.92
      & \textbf{100.00} & \textbf{100.00} & \textbf{100.00} & \textbf{100.00} \\
      \midrule

      \textbf{DR} & \begin{tabular}[c]{@{}c@{}}\texttt{ultimateRace(p1,p2)}\\\texttt{racePairs(p1,p2)}\end{tabular}
      & \textbf{60.77} & 13.64 & 0.12 & 1.37
      & \textbf{79.25} & 1.93 & 0.00 & 0.40
      & \textbf{99.86} & 59.91 & 0.72 & 6.42 \\
      \midrule

      \textbf{SD} & \texttt{mayHb(s1,s2,p)}
      & \textbf{13.16} & 12.19 & 5.22 & 8.60
      & \textbf{12.98} & 11.60 & 2.94 & 8.90
      & 23.08 & \textbf{23.35} & 20.44 & 20.44 \\
      \bottomrule
    \end{tabular}
    }
  \end{table*}

\subsection{Quality of Generated MUSes (RQ2)}

While the efficiency of an \MUS generator is often evaluated by throughput, the utility of the generated conflicts also depends on their diagnostic diversity.
In probabilistic Datalog analysis, discovering a large number of \texttt{MUSes} is useful only if they uncover conflicts spanning a broader range of Datalog relations or facts across different parts of the program logic.
To quantify this aspect, we use \emph{relation coverage} ($\#\mathsf{Rel}$), defined as the number of distinct Datalog relations covered by the \texttt{MUSes} discovered within the \texttt{T/O}.

\begin{table}[htbp]
\centering
\footnotesize
\setlength{\tabcolsep}{4pt}
\caption{Quality of \texttt{MUSes}: Relation Coverage Count}
\label{tab:mus-quality}
\vspace{-0.1in}
\scalebox{1.0}{
\begin{tabular}{l rrrr rrrr rrrr}
\toprule
 & \multicolumn{4}{c}{\textbf{Average}} 
 & \multicolumn{4}{c}{\textbf{Median}} 
 & \multicolumn{4}{c}{\textbf{Maximum}} \\
\cmidrule(lr){2-5} \cmidrule(lr){6-9} \cmidrule(lr){10-13}
 & \textbf{P} & \textbf{M} & \textbf{R} & \textbf{T}
 & \textbf{P} & \textbf{M} & \textbf{R} & \textbf{T}
 & \textbf{P} & \textbf{M} & \textbf{R} & \textbf{T} \\
\midrule
\textbf{SC}    & \textbf{4012} & 1571 & 247 & 1011 & \textbf{3923} & 1549 & 219 & 612 & \textbf{7692} & 3795 & 441 & 2224 \\
\textbf{DR}    & \textbf{6791} & 1509 & 127 & 308  & \textbf{6942} & 1505 & 149 & 381 & \textbf{9710} & 2688 & 174 & 453 \\
\textbf{SD}    & \textbf{1157} & 711  & 171 & 333  & \textbf{641}  & 571  & 174 & 307 & \textbf{3467} & 2046 & 320 & 705 \\
\textbf{Bayes} & \textbf{1333} & 1054 & 143 & 537  & \textbf{1284} & 1026 & 153 & 499 & \textbf{1588} & 1367 & 173 & 715 \\
\bottomrule
\end{tabular}
}
\end{table}

Table~\ref{tab:mus-quality} reports $\#\mathsf{Rel}(X)$ for each method $X$. Here, $X$ ranges over four methods: \NAME, \MARCO, \REMUS, and \TOME, abbreviated as \textbf{P}, \textbf{M}, \textbf{R}, and \textbf{T}, respectively.
Each row corresponds to one application domain (\textbf{SC}, \textbf{DR}, \textbf{SD}, and \textbf{Bayes}), and each column summarizes the aggregated results for one method.
For each method and domain, we report the average, median, and maximum values of $\#\mathsf{Rel}$ across all benchmarks.
For example, the entry in the \textbf{\textcolor{lightmain}{SC}} row and the \textbf{\textcolor{darkermain}{M}} column under \textbf{Average} denotes the average value of $\#\mathsf{Rel}(\mathsf{\textcolor{darkermain}{\textbf{MARCO}}})$ over all \textcolor{lightmain}{\textbf{side-channel}} benchmarks.

\textbf{\emph{Diagnostic Diversity through Coverage.}} In program analysis tasks such as \textbf{SC}, \textbf{DR}, and \textbf{SD}, higher relation coverage suggests that the discovered \texttt{MUSes} capture a broader range of code statements, variables, and information-flow paths that contribute to side-channel vulnerabilities (\textbf{SC}), data races (\textbf{DR}), and unintended synchronization changes (\textbf{SD}). As shown in Table~\ref{tab:mus-quality}, on side-channel analysis (\textbf{SC}) benchmarks, \NAME covers, on average, about $16\times$ more relations than \REMUS (i.e., $4012$ vs.\ $247$), $4\times$ more than \TOME, and $2.5\times$ more than \MARCO. Similar trends can also be seen in other program-analysis domains, including \textbf{DR} and \textbf{SD}.

\textbf{Summary.}
These results suggest that baseline tools such as \REMUS often revisit similar logical paths, whereas \NAME, guided by structure-aware optimizations such as bottom-up inference, is better able to explore distinct branches of the derivation graph.
As a result, within the same \emph{time budget}, \NAME covers a larger portion of the program relations (\emph{more diverse} and \emph{diagnostically useful}) and gives developers a more complete view of potential inconsistencies in probabilistic Datalog programs.

\subsection{Case Study: Reducing False Alarms (RQ3)}

To address RQ3, we evaluate the practical utility of \NAME and the baselines in three real-world static analysis settings: reducing false positives in side-channel analysis (\textbf{SC}), data race analysis (\textbf{DR}), and semantic diffing (\textbf{SD}) for evolving multithreaded programs.
In \textbf{SC}, the key queries are \texttt{key\_sensitive(var)}, which indicate potential secret leakage through program variable \texttt{var}.
In \textbf{DR}, the key queries are \texttt{ultimateRace(p1,p2)} and \texttt{racePairs(p1,p2)}, which indicate potential data races between program points \texttt{p1} and \texttt{p2}.
In \textbf{SD}, the key queries are \texttt{mayHb(s1,s2,p)}, which denote may-happen-before edges in program \texttt{p} and capture execution orders possible under some thread interleavings.
Since these analyses are sound but incomplete, they often produce many spurious reports, increasing manual debugging effort and unnecessary mitigation costs.
We therefore study whether the MUSes generated by \NAME can act as an automated filter by identifying and removing logically inconsistent reasoning paths.

\textbf{Methodology.}
We reconstruct scenarios in which developers use the \texttt{MUSes} generated by \NAME and baselines  (\texttt{MARCO}, \texttt{ReMUS} and \texttt{TOME}) to complement domain knowledge and prune the diagnostic search space that developers must inspect manually.
Each \MUS is encoded as a hard constraint, together with a small number of developer-provided observations on non-critical queries.
The critical queries $S$ (e.g., \texttt{key\_sensitive} in \textbf{SC}) are encoded as soft constraints.
A MaxSAT solver then computes a largest consistent subset $S' \subseteq S$, and we measure the reduction as $\Delta\% = (|S| - |S'|)/|S|$.
Without \MUS constraints, all critical queries would remain potentially consistent and no reduction would occur; the \texttt{MUSes} provide the logical conflicts needed for pruning.
Because multiple solutions may satisfy the hard constraints, MaxSAT returns one that preserves the largest number of alarms, so $\Delta\%$ gives a lower bound on the achievable false-positive reduction.
A higher reduction rate allows developers to focus on alarms more likely to represent genuine correctness or security issues, lowering the cost of manual triage, unnecessary fixes, and mitigation.


\textbf{Comparison Results.} 
Table~\ref{tab:reduction-statistics} summarizes the aggregated reduction rates (\textbf{AVG}, \textbf{MED}, and \textbf{MAX}) achieved by \NAME (P), \texttt{MARCO}(M), \texttt{ReMUS}(R) and \texttt{TOME}(T) for each application domain across all benchmarks.
Both \textbf{SC} and \textbf{DR} show substantial opportunities for false-positive reduction, with average reductions of $69.17\%$ and $60.77\%$ achieved by \NAME, which are both higher than the baselines.
Notably, for the \emph{KS\_transition} benchmark in \textbf{SC}~\cite{bench-P9-18}, \MUS-based pruning removes all reported leaks (\emph{max:} $100\%$), indicating that all detected alarms are false positives.
We manually inspected these cases and confirmed that the removed \texttt{key\_sensitive} queries are benign.
A similar result appears in the \emph{weblech} benchmark from \textbf{DR}, where the reduction reaches $99.86\%$.

\textbf{SD} achieves a lower average reduction rate ($13.16\%$) by \NAME, mainly because its \texttt{MUSes} cover fewer relevant relations.
This is consistent with Table~\ref{tab:mus-quality}, which shows that \textbf{SD} attains lower relation coverage than \textbf{SC} and \textbf{DR}.
In the 41 benchmarks of \textbf{SD} domain, \NAME  achieves the highest false-alarm reduction rate in 35 benchmarks and \texttt{MARCO} wins the rest 6 benchmarks, which explains the reason why \NAME has a slightly lower maximum reduction rate than \texttt{MARCO} ($23.08\%$ vs. $23.35\%$) but has a higher average ($13.16\%$ vs. $12.19\%$) and median reduction rate ($12.98\%$ vs. $11.60\%$).

\textbf{Summary.} 
Overall, these results show that the \texttt{MUSes} generated by \NAME can substantially shrink the diagnostic search space and provide a lower bound on the achievable false-positive reduction, and outperforms the baselines on most of the benchmarks.

\subsection{Stability Analysis}

To assess the randomness of \NAME brought by derivative sampling in Algorithm~\ref{alg:multi-root-bfs}, we repeat each benchmark \textbf{three} times with distinct random seeds and compute the relative standard deviation (RSD) of the number of MUSes discovered within the fixed timeout.

\begin{wraptable}{l}{0.43\columnwidth}
\hspace{-1in}
 \vspace{-0.1in}
    \centering
    \small
    \caption{RSD of \#\texttt{MUSes}.
    }
    \vspace{-0.1in}
    \label{tab:ppprobe-stability}
    \setlength{\tabcolsep}{2.3pt}
      \begin{tabular}{lrrr}
        \toprule
        \textbf{Domain}
          & \textbf{Avg}
          & \textbf{Median}
          & \textbf{Max} \\
        \midrule
        \textbf{SC}    & 7.59\% & 7.51\% & 17.32\% \\
        \textbf{DR}    & 3.66\% & 3.37\% &  7.41\% \\
        \textbf{SD}    & 7.64\% & 5.25\% & 46.00\% \\
        \textbf{Bayes} & 6.28\% & 6.51\% &  9.93\% \\
        \bottomrule
      \end{tabular}%
    \vspace{-0.1in}
\end{wraptable}

Shown as Table~\ref{tab:ppprobe-stability}, \NAME exhibits generally stable performance across seeds: the median RSD ranges from only $3.37\%$ to $7.51\%$ across domains, while the average RSD remains below $8\%$ in all cases.
The $46.00\%$ RSD is an isolated outlier, as the next two highest RSDs drop substantially to $25.24\%$ and $22.03\%$.
It's caused by a single instance, where different seeds trigger inference cascades of substantially different sizes before the timeout. 
Overall, the low domain-level median and average RSDs demonstrate that \NAME's performance is stable across random seeds.

\section{Related Work}
\label{sec:related}

\noindent
\textit{\textbf{Datalog for Program Analysis.}}
Datalog is one of the popular techniques in program analysis such as detecting data race ~\cite{Naik2006, Raghothaman2018, Zhang2017}, detecting side-channel~\cite{wang2019mitigating, wang2021data}, points-to-analysis~\cite{Bravenboer2009, Madsen2016, Smaragdakis2014, Whaley2005, Zhang2014}, program diffing~\cite{Sung2018} and thread-modular analysis~\cite{Kusano2016, kusano2017}.
%
%
Wang et al.~\cite{wang2019mitigating} analyzed power side channels using Datalog, focusing on how key information can leak during register allocation in compilation.  
However, their approach cannot quantify the likelihood of such leaks, limiting its practicality for real-world scenarios.
Zhang et al.~\cite{Zhang2014} and Raghothaman et al.~\cite{Raghothaman2018} incorporated probability into Datalog to address this limitation.  
However, their approaches overlook potential conflicts in probabilistic Datalog programs, which can restrict the expressiveness of the analysis and lead to redundant or false alarms.
Our approach identifies these conflicts, significantly reducing false positives and allowing developers to focus on genuine vulnerabilities.
%

\noindent
\textit{\textbf{Probabilistic Datalog.}}
A wide range of probabilistic logic programming languages extend Prolog or Datalog with quantitative reasoning, including PRISM~\cite{Taisuke1995PRISM}, LPADs~\cite{Vennekens2004LPAD}, Blog~\cite{Milch2007BLOG}, CP-logic~\cite{Vennekens2009CPLOGIC}, PPDL~\cite{Grohe2022PPDL}, Datalogp~\cite{Fuhr1995DATALOGP}, ProbLog~\cite{De2007}, and Scallop~\cite{li2023scallop}.
Most of these systems reduce probabilistic inference to weighted model counting (WMC) and employ symbolic representations such as BDDs for efficiency.
%
%
%
While these languages advance the expressiveness and efficiency of probabilistic inference, they overlook conflicting probabilistic outputs.
In contrast, our work introduces the first framework for conflict detection in probabilistic Datalog with minimal unsatisfiable subsets.

\noindent
\textit{\textbf{Minimal Unsatisfiable Subsets (MUSes).}}
MUS enumeration has been studied in many domains, including linear programming~\cite{Van1981, Chinneck1991, Gleeson1990}, propositional satisfiability~\cite{Dershowitz2006, Lynce2004, Mneimneh2005, Oh2004}, temporal logic~\cite{Bendik2017LTL}, and numerical CSPs~\cite{Gasca2007}. Close to our setting, OCE~\cite{Shlyakhter2003} and RCE~\cite{Torlak2008} extract unsatisfiable cores from declarative specifications but assume reducibility to propositional logic, whereas our domain involves recursive Datalog derivations.
Domain-agnostic algorithms instantiate black-box SAT/SMT oracles and fall into two families: (1)~hitting-set–based methods such as CAMUS~\cite{LiffitonSakallah2005, Liffiton2008}, DAA~\cite{Bailey2005DAA}, and PDDS~\cite{Stern2012PDD}, which exploit the MUS--MCS duality; and (2)~seed-shrink methods such as MARCO~\cite{Liffiton2016FastFlexibleMUS}, TOME~\cite{Bendik2016TOME}, ReMUS~\cite{Bendik2018ReMUS}, and UNIMUS~\cite{Bendik2020Unimus}, which iteratively find and minimize unsatisfiable subsets.
Among these, MARCO, TOME, and ReMUS are applicable to probabilistic Datalog in principle but ignore derivation-graph structure, scaling poorly on such encodings as our experiments confirm.
\section{Conclusion}
\label{sec:conclusion}
In this work, we addressed the critical but overlooked challenge of conflicting outputs in probabilistic Datalog, which can undermine the reliability of quantitative program analysis. 
We introduced \NAME, the first framework to soundly and efficiently identify these conflicts by representing them as minimal unsatisfiable subsets (\texttt{MUSes}). 
\NAME leverages structural and statistical dependencies within Datalog derivation graphs to prioritize and prune search space by derivative sampling and bottom-up \UNSAT inference.
Our extensive evaluation on 70 real-world benchmarks demonstrated that \NAME outperforms state-of-the-art baselines.
Applied as a post-analysis filter, the extracted MUSes reduce the diagnostic search space by an average of 69\% and 61\% on side-channel and data-race benchmarks, respectively, providing a conservative lower bound on false-positive elimination.

\section{Data Availability Statement}
\NAME, along with the baselines and benchmarks, is available via \href{https://zenodo.org/records/21776435}{\textcolor{blue!60}{the artifact link}}.


\bibliographystyle{ACM-Reference-Format}
\bibliography{reference}

\end{document}